\documentclass[reqno]{amsart}
\usepackage[foot]{amsaddr}
\usepackage{graphicx}
\usepackage[margin=3cm]{geometry}
\usepackage{amsmath, amssymb,amsthm}
\usepackage[scr=rsfs]{mathalpha}
\usepackage[shortlabels]{enumitem}
\usepackage{mlmodern}
\DeclareSymbolFont{largesymbols}{OMX}{cmex}{m}{n} % prevent extra large \sqrt from mlmodern
\usepackage{mathtools} % multlined
\usepackage{tikz}
\usetikzlibrary{decorations.pathreplacing,decorations.pathmorphing,arrows.meta}
\usepackage{upref} % \ref is upright even in italicized environments
\usepackage[colorlinks]{hyperref}
\hypersetup{citecolor=blue,filecolor=blue,linkcolor=blue,urlcolor=navyblue}
\definecolor{navyblue}{rgb}{0.0, 0.0, 0.5}

\newtheorem{thm}{Theorem}[section]
\newtheorem{prop}[thm]{Proposition}
\newtheorem{lem}[thm]{Lemma}

\newtheorem{conj}{Conjecture}
\newtheorem{assumption}{Assumption}
\newtheorem*{conj*}{Conjecture}
\newtheorem{problem}{Problem}
\theoremstyle{definition}

\newtheorem{rmk}{Remark}[section]

\newtheorem*{claim*}{Claim}

\newcommand{\N}{\mathbb{N}}
\newcommand{\Z}{\mathbb{Z}}
\newcommand{\R}{\mathbb{R}}
\newcommand{\C}{\mathbb{C}}

\newcommand{\Sc}{\mathbb{S}_\C}

\newcommand{\E}{\mathbf{E}}
\newcommand{\oneb}{\mathbf{1}}
\renewcommand{\P}{\mathbf{P}}

\newcommand{\CN}{\mathcal{CN}}
\newcommand{\RN}{\mathcal{N}}

\newcommand{\Tr}{\operatorname{Tr}}

\newcommand{\dsim}{\sim}
\newcommand{\rvsim}{\overset{d}{=}}

\newcommand{\tv}{\mathrm{TV}}
\newcommand{\G}{\mathcal{G}}

\newcommand{\ggtnk}{\G\G^T_{NK}}
\newcommand{\gsym}{\G_N^\mathrm{sym}}

\newcommand{\Gb}{\mathbf{G}}
\newcommand{\dkl}{D_{\mathrm{KL}}}
\newcommand{\rank}{\operatorname{rank}}
\newcommand{\qf}{q}
\newcommand{\poly}{\operatorname{poly}}
\newcommand{\Haf}{\operatorname{Haf}}
\newcommand{\Per}{\operatorname{Per}}
\newcommand{\hs}{\mathrm{hs}}
\newcommand{\Beta}{\operatorname{Beta}}
\newcommand{\Unif}{\operatorname{Unif}}

\newcommand{\sans}[1]{\text{\normalfont\fontfamily{lmss}\selectfont #1}}
\newcommand{\sharpp}{\sans{\#P}}
\newcommand{\bpp}{\sans{BPP}}
\newcommand{\fbpp}{\sans{FBPP}}
\newcommand{\postbpp}{\sans{PostBPP}}
\newcommand{\fpostbpp}{\sans{FPostBPP}}
\newcommand{\np}{\sans{NP}}
\newcommand{\p}{\sans{P}}

\newcommand{\sym}{\mathrm{Sym}}
\newcommand{\event}{\mathcal{E}}

\newcommand{\U}{\mathrm{U}}
\newcommand{\Uf}{\mathrm{U}'}

\newcommand\numberthis{\stepcounter{equation}\tag{\theequation}}
\numberwithin{equation}{section}

\begin{document}
\title[]{Proof of the hiding conjecture for Gaussian boson sampling with an arbitrary number of squeezed input modes}

\author{
Laura Shou$^{1}$,
Alexey V. Gorshkov$^{1,3}$,
Victor Galitski$^1$,
Sarah H. Miller$^{2}$
}

\address{\normalfont$^1$Joint Quantum Institute, Department of Physics, NIST/University of Maryland, College Park, MD 20742, USA}

\address{\normalfont$^2$Applied Research Laboratory for Intelligence and Security,
University of Maryland, College Park, Maryland 20742, USA}

\address{\normalfont$^3$Joint Center for Quantum Information and Computer Science, NIST/University of Maryland, College Park, MD, 20742, USA}

\begin{abstract}
Gaussian boson sampling (GBS) is a sampling task proposed to demonstrate quantum advantage.
We consider Gaussian boson sampling on $M$ optical modes, with $K$ equally squeezed input modes and $N$ observed photon counts.
We complete the proof of the hiding conjecture for Gaussian boson sampling with an arbitrary number of squeezers $K$, which is a part of the argument for classical hardness of GBS. In particular, we show that for any $K$ and $N=o(\sqrt{K})$, the symmetric product $MK^{-1/2}U_{NK}U_{NK}^T$, for $U_{NK}$ the top left $N\times K$ submatrix of an $M\times M$ Haar random unitary $U$, is close in total variation distance to both an $N\times N$ symmetric complex Gaussian matrix $\Gb$ with independent entries, and the symmetric product $GG^T/\sqrt{K}$ for $G$ an $N\times K$ matrix of iid standard complex Gaussians.
We show however that the density-based instance generating method of \cite[Lemma 5.8]{aa} used to efficiently implement a hiding procedure fails for Gaussian boson sampling with $K=cM$ if $c<1/2$. Instead we use approximate instance generating to implement the hiding for the usual classical hardness reduction.
\end{abstract}

\maketitle

\section{Introduction}

Gaussian boson sampling \cite{HamiltonGBS2017} is a sampling task which is expected to be hard for classical computers, but currently realizable in existing quantum experiments.
In this task, one prepares an initial Gaussian state consisting of $M$ single-mode squeezed vacuum states with squeezing parameters $s_i\ge0$. For simplicity we take the first $K$ modes to have the same squeezing parameters $s_i=s>0$, and the remaining $M-K$ modes to have the vacuum state.
The initial state is then inserted into an $M$-mode passive linear optical network described by an $M\times M$ linear optical unitary $U$ (Figure~\ref{fig:gbs}). After interfering in the linear optical network, the resulting output state is measured in the photon-number basis, producing a photon count outcome $\mathbf n\in\{0,1,2,\ldots\}^M$, with total photon number $N=\sum_{i=1}^M\mathbf n_i$.
Let $I_K$ be the $M\times M$ diagonal matrix whose first $K$ diagonal entries are 1s, while the rest are 0s. 
Consider the $N\times N$ submatrix $(UI_KU^T)_{\mathbf n,\mathbf n}$ of the $M\times M$ matrix $UI_KU^T$ formed by taking the rows and columns corresponding to $1$s in $\mathbf n$.
For collision-free outcomes $\mathbf n\in\{0,1\}^M$, the probability of observing $\mathbf n$ is \cite{HamiltonGBS2017,KruseGBS2019}
\begin{align}\label{eqn:hprob}
\P[\mathbf n]&=\frac{\tanh^N(s)}{\cosh^K(s)}|\Haf[(UI_KU^T)_{\mathbf n,\mathbf n}]|^2,
\end{align}
where $\Haf[X]$ denotes the \emph{hafnian} of a symmetric matrix $X$; letting $N=2n$,
the hafnian is $\Haf[X]=\sum_{\pi\in\mathcal P_2(2n)}\prod_{\{i,j\}\in\pi}X_{ij}$, where $\mathcal P_2(2n)$ is the set of all pairings (i.e.~perfect matchings) of $2n$ elements. 
Conditioned on observing total photon number $N$, the collision-free condition for Haar random $U$ occurs with high probability if $N=o(M^{1/2})$ \cite{aa,deshpande2022quantum}. Since the average number of photons is $\langle N\rangle=K\sinh^2s$, one takes the squeezing parameter $s$ to be small to ensure $\langle N\rangle=o(M^{1/2})$.

\begin{figure}[htb]
\begin{tikzpicture}[scale=.8]
\foreach \mode in {1,...,8}{
\draw (.5,-\mode/2) --++ (3.5,0);
\draw[->] (.5,-\mode/2)--++(.4,0);
% right side
\draw[xshift=4cm] (.5,-\mode/2) --++ (1.25,0);
\draw[xshift=4cm,yshift=-.15cm] (1.75,-\mode/2) arc (-90:90:.15);
\draw[xshift=4cm,yshift=-.15cm] (1.75,-\mode/2)--++(0,.3);
\draw[xshift=4cm,  decorate, decoration={zigzag, segment length=6pt, amplitude=2pt}] (1.9,-\mode/2)--++(.5,0);
}
\foreach \mode in {1,...,4}{
\draw[inner color=purple, rotate around={45:(.6,-\mode/2)},color=white] (.6,-\mode/2) ellipse (1.75mm and 1mm);
}
\foreach \mode in {2,6}{
\fill[purple,xshift=4cm,yshift=-.15cm] (1.75,-\mode/2) arc (-90:90:.15);
}
\draw[fill=gray!30] (1,-4.25)--(1,-.25)--(5.5,-.25)--(5.5,-4.25)--cycle;
\node at (3.2,-2) {\parbox{2.75cm}{$M\times M$ Haar\\ random unitary $U$}};
\foreach \mode in {1,3,4,5,7,8}{
\node[right] at (6.3,-\mode/2) {$0$};
}
\foreach \mode in {2,6}{
\node[right, purple] at (6.3,-\mode/2) {$1$};
}
\node[above right] at (6.3,-.4) {$\mathbf{n}$};
%labels
\draw[decoration={brace,raise=3pt,aspect=.5,amplitude=4pt},decorate,xshift=1.5mm] (0.5,-2.2)--++(0,2);
\node[left] at (1,-1.2)  {\parbox{2.5cm}{\centering $K$ \\squeezed\\states}};

\draw[decoration={brace,raise=3pt,aspect=.5,amplitude=4pt,mirror},decorate,xshift=1.5mm] (0.5,-2.25)--++(0,-2);
\node[left] at (1,-3.2) {\parbox{2.5cm}{\centering $M-K$ \\vacuum\\states}};
\end{tikzpicture}
\caption{Illustration of Gaussian boson sampling experiment. $K$ single-mode squeezed states are inserted into an $M$-mode linear optical network described by an $M\times M$ linear optical unitary $U$. The output state is measured in the photon number basis, producing a random sample $\mathbf n\in\{0,1,2,\ldots\}^M$ of photon counts. For collision-free outcomes, $\mathbf n\in\{0,1\}^M$.}\label{fig:gbs}
\end{figure}

The argument that it is classically hard to generate samples $\mathbf n$ according to the distribution \eqref{eqn:hprob} is based on classical hardness of exactly computing permanents and hafnians of complex matrices \cite{Valiant1979}.
However, since a realistic Gaussian boson sampler will always have some amount of noise and errors, one has to consider the task of \emph{approximate sampling} from the distribution described by \eqref{eqn:hprob}.
The hardness of approximate average-case sampling then relies on two properties:
\begin{enumerate}
\item A complexity theoretic conjecture that \emph{approximating} the hafnian of a random complex Gaussian-type matrix to a certain additive error is \sharpp-hard in the average-case. 
\item A hiding conjecture, which essentially states that one can ``hide'' a random complex Gaussian matrix $X$ as a submatrix of $UI_KU^T$ in total variation distance 
(Conjecture~\ref{conj:hiding}). This would allow one to use a GBS oracle running $U$ to approximate $|\Haf(X)|^2$ in $\bpp^\np$.
\end{enumerate}

We focus on the hiding conjecture (2).
Previously, only special cases of $K$ were proved to satisfy the hiding property, namely in the sparse squeezer regime $NK=o(M)$ \cite{aa,deshpande2022quantum,shou2026proof}, and the case $K=M$ \cite{shou2026proof}. Here we prove the hiding conjecture for all other $K$, which proves the full range for the hiding conjecture for Gaussian boson sampling. 

\subsection{Preliminaries}

To state the precise hiding conjecture and results, we first recall some definitions.
The total variation distance (TVD) between two probability  measures $\mu$ and $\nu$ on a measure space $(E,\mathcal E)$ is
\begin{align}\label{eqn:tv}
d_\tv(\mu,\nu)&\equiv\sup_{A\in\mathcal E}|\mu(A)-\nu(A)|.
\end{align}
It also has the characterization
\begin{align}\label{eqn:tvd-f}
d_\tv(\mu,\nu)&=\sup_{\|f\|_\infty\le1}\frac12\int f\,(d\mu-d\nu).
\end{align}
If $\mu$ and $\nu$ have densities $p$ and $q$ with respect to a measure $d\lambda$ on $E$, then also
\begin{align}\label{eqn:tv2}
d_\tv(\mu,\nu)&=\frac{1}{2}\int_E|p(x)-q(x)|\,d\lambda(x).
\end{align}
For random variables $Z$ and $W$, we write $d_\tv(Z,W)$ to mean the total variation distance between their distributions $\mathcal L(Z)$ and $\mathcal L(W)$.

For any coupling $(X,Y)$ of $(\mu,\nu)$, i.e. random variables $X,Y$ such that $X\dsim\mu$ and $Y\dsim\nu$, there is the TVD coupling bound $d_\tv(\mu,\nu)\le\P[X\ne Y]$.

We will use the following probability distributions.
\begin{itemize}
\item Let $\CN(0,\sigma^2)$ denote the complex Gaussian distribution whose real and imaginary parts are independent Gaussians with mean $0$ and variance $\sigma^2/2$.
\item Let $\gsym$ denote the ensemble of $N\times N$ symmetric random matrices $\Gb$ with $\CN(0,2)$ diagonal entries and $\CN(0,1)$ off-diagonal entries, with all entries independent modulo the symmetry requirement. We will use bold font $\Gb$ only for $\Gb\dsim\gsym$.
\item Let $\ggtnk$ be the ensemble of $N\times N$ matrices $GG^T/\sqrt{K}$ where $G$ is an $N\times K$ matrix of iid standard complex Gaussian entries. The normalization is chosen so that $GG^T/\sqrt{K}$ has a nondegenerate limiting distribution as $K\to\infty$. Note that since $G$ is complex, $GG^T$ is \emph{not} Wishart.
\item Let $\mathcal H_M$ denote the Haar measure on the unitary group $\U(M)$.
\end{itemize}

The hiding conjecture can be stated as follows \cite{aa,deshpande2022quantum,shou2026proof}:
\begin{conj}[hiding in Gaussian boson sampling]\label{conj:hiding}
Let $U_{NK}$ be the top left $N\times K$ submatrix of an $M\times M$ Haar random unitary matrix, and let $Z=Z_{N,K}$ be a matrix with distribution given by either $\gsym$ or $\ggtnk$.
Then for $N\le K\le M$, there exist polynomials $p,r$ such that for any $\delta>0$ and $M\ge p(N)/r(\delta)$, at least one of the choices of $Z$ satisfies
\begin{align}
d_\tv(MK^{-1/2}U_{NK}U_{NK}^T,Z)=O(\delta), \label{eqn:dtv}
\end{align}
where $d_\tv(\cdot,\cdot)$ denotes the total variation distance as defined in \eqref{eqn:tv}.
\end{conj}

The sparse squeezer case $N=K=o(M^{1/5})$ was observed \cite{HamiltonGBS2017,deshpande2022quantum} to follow from hiding in Fock boson sampling \cite{aa} with $Z\dsim\ggtnk$. However, the non-sparse regime where $K$ is large is of most interest experimentally \cite{zhong2020quantum, Zhong2021, madsen2022, deng2023gaussian, liu2026gaussian}. Additionally, a large number of squeezers $K$ is favorable for the anticoncentration results of \cite{Ehrenberg2025transition,Ehrenberg2025second}, which provide evidence for sampling hardness. 
In the large $K\propto M$ regime, only the case $K=M$ was proved, with $Z\dsim\gsym$, using that in this case the matrix $U_{NK}U_{NK}^T$ is a submatrix of a COE (Circular Orthogonal Ensemble) random matrix  \cite{shou2026proof}. 

\subsection{Main results}

In this paper, we prove the hiding conjecture for Gaussian boson sampling for any number of input squeezed modes $K$, where $N\le K\le M$. This fully resolves Conjecture~\ref{conj:hiding}. Additionally, we prove TVD closeness of the distributions $\gsym$ and $\ggtnk$ in the regime $N=o(\sqrt{K})$, so that Conjecture~\ref{conj:hiding} holds with either distribution in this regime.

Recall in the context of Gaussian boson sampling, $M$ is the number of modes, $K$ is the number of squeezed modes, and $N$ is the number of detected photons. While $N$ is even for Gaussian boson sampling, we do not require $N$ to be even in Theorems~\ref{thm:hiding} and \ref{thm:ggt} below.

\begin{thm}[hiding in Gaussian boson sampling]\label{thm:hiding}
Let $N\le K\le M$, let $U_{NK}$ be the top left $N\times K$ submatrix of an $M\times M$ Haar random unitary matrix $U$, and let $\Gb\dsim\gsym$. Then as $K\to\infty$,
\begin{align}\label{eqn:tvhiding}
d_\tv(MK^{-1/2}U_{NK}U_{NK}^T,\Gb)=O\left(\frac{N}{\sqrt{K}}\right),
\end{align}
which is $o(1)$ if $N=o(\sqrt{K})$.
\end{thm}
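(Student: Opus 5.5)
We split the bound into two comparisons through the intermediate ensemble $\ggtnk$ and apply the triangle inequality. The first comparison is between $MK^{-1/2}U_{NK}U_{NK}^T$ and $GG^T/\sqrt K$. The second is between $GG^T/\sqrt K$ and $\Gb\dsim\gsym$.

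\textbf{Step 1: the Gaussian comparison.} We will show that
\[
d_\tv\bigl(GG^T/\sqrt K,\Gb\bigr)=O(N/\sqrt K).
\]
Write $G=(g_1,\dots,g_K)$ by columns, so that $GG^T/\sqrt K=K^{-1/2}\sum_{j}g_jg_j^T$ is a normalized sum of iid rank-one symmetric matrices. The mean of each summand is $0$, since $\E g g^T=0$ for complex Gaussians. Its pseudo-covariance vanishes, and its covariance is that of $\gsym$: we have $\E|g_ag_b|^2=1$ for $a\ne b$ and $\E|g_a^2|^2=2$.

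A generic CLT in dimension $N^2$ loses too much. We will therefore work directly with densities, using the rotational structure. Conditional on the row Gram matrix, the law of $GG^T$ is invariant under $X\mapsto VXV^T$ for $V\in\U(N)$, so Takagi/singular values reduce the problem to the eigenvalue level.

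A cleaner alternative, which we would try first, uses the exact density of $W=GG^T$ for $K\ge N$. By a change of variables $G\mapsto$ (symmetric $W$, fiber), this density is proportional to
\[
\det(\bar WW)^{(K-N-1)/2}\,\E_{\text{fiber}}e^{-\Tr G^*G},
\]
expressible through a Bessel-type integral over $\U(N)$. We would then compute the Kullback--Leibler divergence to $\gsym$, rescaled by $\sqrt K$, and use Pinsker. The expected KL bound is $O(N^2/K)$, which gives a TVD of order $O(N/\sqrt K)$. This matches the heuristic that the third cumulant contribution in each of the $\sim N^2$ coordinates is $O(K^{-1/2})$, while the leading deviation is second order, so it sums to $N^2/K$.

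\textbf{Step 2: the unitary comparison.} The top $N\times K$ block $U_{NK}$ has the law of $G(G^*G')^{-1/2}$-type truncations. More usefully, the first $N$ rows of $U$ are the Gram--Schmidt orthonormalization of an $N\times M$ Gaussian matrix $H$:
\[
U_{N\cdot}=(HH^*)^{-1/2}H .
\]
Hence $U_{NK}=(HH^*)^{-1/2}H_K$, where $H_K$ consists of the first $K$ columns of $H$, and
\[
MK^{-1/2}U_{NK}U_{NK}^T=\bigl(HH^*/M\bigr)^{-1/2}\,\frac{H_KH_K^T}{\sqrt K}\,\bigl(\overline{HH^*/M}\bigr)^{-1/2}.
\]
The middle factor has the law of $\ggtnk$. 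The conjugating factor is $A=(HH^*/M)^{-1/2}=I+O(\sqrt{N/M})$, but it is correlated with the middle factor through $H_K$.

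The plan is to condition on the $N\times N$ Gram data. Write $HH^*=H_KH_K^*+H'H'^*$ with $H'$ independent. The symmetric matrix $H_KH_K^T$ and the Hermitian $H_KH_K^*$ are jointly distributed, and we need TVD control of the map $X\mapsto AXA^T$. We would bound the TVD between $AXA^T$ and $X$ by the TVD between the joint law and one in which $A$ is replaced by an independent copy. Combined with the $\U(N)$-equivariance of $\gsym$ (if $X\dsim\gsym$ then $VXV^T\dsim\gsym$ for unitary $V$), this reduces the problem to the near-identity positive part of $A$.

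Perturbing $\gsym$ by $X\mapsto AXA^T$ with $A=I+E$ changes the density by a Jacobian factor $|\det A|^{2(N+1)}$ and a quadratic form term. Both are controlled by $N\|E\|_{\hs}$-type quantities. Here $E$ has size $\|E\|_{\hs}\approx N/\sqrt M$ from the off-$K$ columns, and the corresponding correlation within $K$ is comparable. This should give a contribution of order $O(N/\sqrt K)$, using $K\le M$.

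\textbf{Main obstacle.} We expect the main obstacle to be Step 2's correlation between $A$ and $H_KH_K^T$. The natural route is to decompose $H_KH_K^T$ through its Takagi form, observe that $A$ depends only on $H_KH_K^*$ and $H'$, and show that the dependence between $H_KH_K^*$ and the symmetric product is weak in TVD at scale $N/\sqrt K$. If this fails, we would instead prove $d_\tv(MK^{-1/2}U_{NK}U_{NK}^T,\Gb)$ directly by exact densities. For $K\ge 2N$, the law of $U_{NK}$ has density proportional to $\det(I-Y^*Y)^{M-K-N}$, and we would compare the induced law of $YY^T$ with the Gaussian case, where the density is $e^{-M\Tr Y^*Y}$, integrated over fibers. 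Their ratio is $1+O(N^2/K)$ on a high-probability set, which again yields a TVD of order $N/\sqrt K$ via Pinsker.
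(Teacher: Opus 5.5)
Both of your steps rest on estimates you do not actually establish. Your representation $MK^{-1/2}U_{NK}U_{NK}^T=A\,(H_KH_K^T/\sqrt K)\,A^T$ with $A=(HH^*/M)^{-1/2}$ is correct, but the two comparisons built on it are not proved.

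\textbf{Step 1.} The bound $d_\tv(GG^T/\sqrt K,\Gb)=O(N/\sqrt K)$ is only a plan. For complex $G$ the generic fibres of $G\mapsto GG^T$ are orbits of the noncompact group $\mathrm{O}(K,\C)$, so the ``Bessel-type'' density cannot simply be fed into a KL computation. Your cumulant heuristic is also off: all third-order moments of $gg^T$ vanish under $g\mapsto e^{i\theta}g$. In the paper this comparison is Theorem~\ref{thm:ggt}, which is \emph{derived from} the statement you are proving (together with the sparse bound at $M=K^3$). So it is not an available input, and a direct proof would be a separate result of comparable difficulty.

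\textbf{Step 2, correlation.} You leave the dependence between $A$ and $H_KH_K^T$ unresolved. The missing idea is an exact decoupling rather than an approximate-independence estimate. Conditioned on the span $\mathcal V$ of the last $M-K$ columns, the first $K$ columns of $U$ form a Haar basis of $\mathcal V^\perp$. This gives $U_{NK}\rvsim P_0^{1/2}W_{NK}$ with $W$ an independent $K\times K$ Haar unitary (Lemma~\ref{lem:reduce}). In your notation it is the polar decomposition $H_K=(H_KH_K^*)^{1/2}W$, with $W$ uniform on the Stiefel manifold and independent of $(H_KH_K^*,H')$. The price is that the middle factor becomes $\sqrt K W_{NK}W_{NK}^T$ rather than $GG^T/\sqrt K$. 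One then needs the known $K=M$ (COE) bound $d_\tv(\sqrt K W_{NK}W_{NK}^T,\Gb)=O(N/\sqrt K)$, not a Gaussian comparison.

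\textbf{Step 2, perturbation rate.} Even with independence, your perturbation step does not give $O(N/\sqrt K)$. With $\|E\|_\hs\approx N/\sqrt M$, your own ``$N\|E\|_\hs$'' is $N^2/\sqrt M$. More fundamentally, no bound obtained conditionally on $A$ can beat the paper's weaker Proposition~\ref{prop:hiding-3}. For fixed $q$ with $\delta=q^2-I$, the covariance of $q\Gb q^T$ on $\sym_N$ differs from the identity by an operator of squared Hilbert--Schmidt norm $\approx(N+2)\Tr\delta^2+(\Tr\delta)^2$. For typical $\delta$ this is of order $LN^3/(KM)$, with $L=M-K$. The TVD between two Gaussians is of the order of this norm when it is small. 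Averaging pointwise bounds therefore caps you at $\sqrt{LN^3/(KM)}$, i.e.\ $N=o(K^{1/3})$ when $L$ is comparable to $M$.

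Reaching $N=o(\sqrt K)$ requires showing that the \emph{mixture} over $Q$ is much closer to $\Gb$ than its typical components. The paper does this with three ingredients:
\begin{enumerate}[(i)]
\item the exact identity $1+\chi^2=\E_{A,B}\det(I-AB)^{-1}$ over independent copies $A,B$;
\item Bakry--\'Emery concentration for the matrix-variate beta law of $P_0$;
\item the isotropy $\E A=aI$ with $a=O(\sqrt N/K)$.
\end{enumerate}

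\textbf{Fallback.} Your fallback also fails as stated. The density $\propto\det(I-Y^*Y)^{M-K-N}$ requires $M\ge N+K$, not $K\ge 2N$. Moreover, its ratio to the Gaussian density of $Y$ cannot be $1+O(N^2/K)$ on a high-probability set when $K\propto M$, since $U_{NK}$ itself is far from Gaussian in TVD unless $NK=o(M)$. Closeness appears only after integrating over the fibres of $Y\mapsto YY^T$, which is exactly the hard part.
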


\begin{rmk}
\begin{enumerate}[(i)]

\item Theorem~\ref{thm:hiding} combined with the sparse case result \cite[Theorem 1.5]{shou2026proof} for $NK=o(M)$ gives the full range of Conjecture~\ref{conj:hiding} by choosing polynomials $p,r$ appropriately. 
For example:
\begin{itemize}
\item For $K\propto M$, one can take $Z\dsim\gsym$, $p(N)=N^2$, and $r(\delta)=c\delta^2$ for sufficiently small $c$, by Theorem~\ref{thm:hiding}. 
\item For $K\le M^{1/2}$, one can take $Z\dsim\ggtnk$, $p(N)=N^{2+\epsilon}$, and $r(\delta)=c\delta^4$ by the sparse result, rewritten as \eqref{eqn:sparse}.
\item For any $K\ge M^{1/2}$, one can take $Z\dsim\gsym$, $p(N)=N^4$, and $r(\delta)=c\delta^4$, again by Theorem~\ref{thm:hiding}.
\end{itemize}
In fact, using Theorem~\ref{thm:ggt} below, we also obtain Conjecture~\ref{conj:hiding} with a fixed distribution $Z=GG^T/\sqrt{K}\dsim\ggtnk$ and fixed polynomials $p,r$, over any $N\le K\le M$: For $p(N)=N^4$, $r(\delta)=c\delta^4$, and any $M\ge p(N)/r(\delta)$,
\begin{align}
d_\tv(MK^{-1/2}U_{NK}U_{NK}^T,GG^T/\sqrt{K})=O(\delta).
\end{align}

\item In the main regime of interest $K\propto M$, the condition $N=o(\sqrt{K})$ becomes $N=o(\sqrt{M})$, which is the conjectured maximal submatrix size allowed \cite{deshpande2022quantum}. As we will show in Theorem~\ref{thm:ggt}, in the regime $N=o(\sqrt{K})$, TVD closeness in \eqref{eqn:dtv} actually holds with either distribution $Z\dsim\gsym$ or $\ggtnk$.

\item The proof of Theorem~\ref{thm:hiding} relies on relating the $K<M$ case to the $K=M$ case. One could alternatively write down the density formula for $U_{NK}U_{NK}^T$ in terms of a matrix integral and use quantitative Laplace approximation to bound the TVD. However, due to error bounds for high-dimensional Laplace's method, we do not expect this approach, at least with standard estimates, to obtain any sharper results.

\end{enumerate}
\end{rmk}

Let $G$ be an $N\times K$ matrix of iid standard complex Gaussians.
Combining Theorem~\ref{thm:hiding} with TVD closeness of $MU_{NK}U_{NK}^T$ to $GG^T$ in a ``sparse squeezer'' regime with $N\le K$ and $NK=o(M)$ \cite{shou2026proof}, we will obtain
\begin{thm}[$GG^T$ vs $\gsym$]\label{thm:ggt}
Let $G$ be an $N\times K$ matrix of iid standard complex Gaussians, and let $\Gb\dsim\gsym$. Then as $K\to\infty$,
\begin{align}
d_\tv(GG^T/\sqrt{K},\Gb)=O\left(\frac{N}{\sqrt{K}}\right),
\end{align}
which is $o(1)$ if $N=o(\sqrt{K})$.
\end{thm}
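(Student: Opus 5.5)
The plan is to avoid any direct density computation for $GG^T/\sqrt K$. Instead, both $GG^T/\sqrt K$ and $\Gb$ will be compared with the same intermediate object, namely $M'K^{-1/2}U_{NK}U_{NK}^T$, where $U$ is Haar on $\U(M')$ for an auxiliary dimension $M'$ that we choose freely. The triangle inequality for $d_\tv$ then gives
\begin{align*}
d_\tv(GG^T/\sqrt K,\Gb)\le d_\tv\bigl(GG^T/\sqrt K,\,M'K^{-1/2}U_{NK}U_{NK}^T\bigr)+d_\tv\bigl(M'K^{-1/2}U_{NK}U_{NK}^T,\,\Gb\bigr).
\end{align*}
The left-hand side does not depend on $M'$. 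We will therefore fix $N,K$ and let $M'\to\infty$ at the end.

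The second term is handled by Theorem~\ref{thm:hiding}, applied with $M'$ in place of $M$. Since $N\le K\le M'$, it gives $O(N/\sqrt K)$. The implied constant must not depend on $M'$, and this needs to be checked. The bound in \eqref{eqn:tvhiding} is stated as a function of $N,K$ only, uniformly over $M\ge K$, so I expect this check to be direct.

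For the first term we use the sparse squeezer result \cite[Theorem 1.5]{shou2026proof}. That result states that $d_\tv(M'U_{NK}U_{NK}^T, GG^T)\to0$ when $N\le K$ and $NK=o(M')$, with an explicit rate that is polynomial in $NK/M'$. Total variation distance is invariant under the bijection $X\mapsto K^{-1/2}X$. Hence
\begin{align*}
d_\tv\bigl(GG^T/\sqrt K,\,M'K^{-1/2}U_{NK}U_{NK}^T\bigr)=d_\tv\bigl(GG^T,\,M'U_{NK}U_{NK}^T\bigr),
\end{align*}
and for fixed $N,K$ this quantity tends to $0$ as $M'\to\infty$. Letting $M'\to\infty$ in the triangle inequality then yields $d_\tv(GG^T/\sqrt K,\Gb)\le C N/\sqrt K$, which is the claim.

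The main point to verify is that the rate in the cited sparse result actually goes to zero as $M'\to\infty$ with $N,K$ held fixed, rather than being stated only along a joint scaling. Even a qualitative convergence statement is enough for fixed $(N,K)$, since we never need a quantitative bound on this term. If the cited theorem were phrased only asymptotically in $N$, there is a fallback. For fixed $N,K$, the rescaled top-left $N\times K$ block $\sqrt{M'}U_{NK}$ converges in total variation to $G$ as $M'\to\infty$ (this follows from the classical density of truncated Haar unitaries). TVD contracts under the measurable map $Y\mapsto YY^T$, which then gives the same conclusion.
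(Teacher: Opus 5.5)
Your proposal is correct and follows the paper's own proof: the paper uses the same triangle inequality through $MK^{-1/2}U_{NK}U_{NK}^T$, bounding one term by Theorem~\ref{thm:hiding} and the other by the sparse bound \eqref{eqn:sparse}. The only difference is cosmetic: the paper fixes $M=K^3$ (any $M=\omega(K^2)$ works), so the sparse error $O(\sqrt{N}/K)$ is absorbed into $O(N/\sqrt{K})$, whereas you send $M'\to\infty$. Your route relies on the hiding constant being independent of $M$, and the paper's proof of Theorem~\ref{thm:hiding} does give that uniformity.
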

This implies that when $N=o(\sqrt{K})$, it does not matter whether we use $GG^T$ or $\Gb\dsim\gsym$ as the target Gaussian-type distribution in the hiding statement. 
We note that the matrices $\Gb$ are much nicer to work with due to their independent entries. Moreover, they are much closer to the random Gaussian matrices used in the hardness reduction for Fock boson sampling, which suggests techniques and results for hardness of Fock boson sampling will carry over more easily to $\Gb$ than to $GG^T$.

\begin{rmk}
The $N=o(\sqrt{K})$ condition differs from the $\Theta(K^{1/3})$ transition for GOE (Gaussian Orthogonal Ensemble) behavior of Wishart matrices with $K$ degrees of freedom \cite{bubeck2016testing,jiang2015approximation,RaczRichey2019}. It remains to locate the precise location of the transition in this case.
\end{rmk}

The TVD hiding property Theorem~\ref{thm:hiding} demonstrates it is possible to ``hide'' a Gaussian matrix $\Gb\dsim\gsym$ as a random instance of $MK^{-1/2}(UI_KU^T)_{S,S}$ for $U$ a Haar random unitary matrix and $S$ a random size $N$ subset of $\{1,\ldots,M\}$, up to small TVD error. 
However, for the usual classical hardness argument, given $\Gb$, we need to generate an instance of such a random $U$ and $S^*$ efficiently.
The instance-generating procedure for Fock boson sampling uses a rejection sampling method based on the density bound $\tilde f(Z)\le (1+o(1))\tilde g(Z)$, where $\tilde f$ is the unitary submatrix density and $\tilde g$ the Gaussian density \cite[Lemma 5.7]{aa}. However, we show that such a pointwise density bound cannot hold for Gaussian boson sampling with $K=\alpha M$, $0<\alpha<1/2$.

\begin{prop}[no density ratio bound]\label{prop:inf}
Let $K\ge4$, $K\ge N$, and $N+K\le M$. Denote by $U_{NK}$ the top left $N\times K$ submatrix of an $M\times M$ Haar random unitary matrix $U$, and let $\Gb\dsim\gsym$.
Let $f$ be the density function of $MK^{-1/2}U_{NK}U_{NK}^T$ (if it exists) and $g$ be the density function of $\Gb$ over the space of $N\times N$ complex symmetric matrices\footnote{We view this as the space spanned by the upper triangular matrix elements}. Then for $K=\alpha M$, $0<\alpha<1/2$, and any $N\le K$, either the density function doesn't exist, or
\begin{align}\label{eqn:densityratio}
\operatorname{ess\,sup}_{Z}\frac{f(Z)}{g(Z)}\ge\Omega_\alpha(e^{c_\alpha M}),
\end{align}
for a constant $c_\alpha>0$ and where $\Omega_\alpha$ indicates the implicit constant may also depend on $\alpha$. 

In particular, consider a sequence of $K=K(M),N=N(M)$ with $\alpha=\alpha(M)=K(M)/M$ bounded in $[\eta,1/2-\eta]$ for some $\eta>0$, and let $f_{M,K,N},g_{M,K,N}$ denote the density functions as above if they exist. Then 
\begin{align}
\liminf_{M\to\infty}\operatorname{ess\,sup}_{Z}\frac{f_{M,K,N}(Z)}{g_{M,K,N}(Z)}=+\infty.
\end{align}
\end{prop}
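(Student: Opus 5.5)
The plan is to reduce to a one-dimensional statement about the $(1,1)$ entry and then compare large-deviation tails. Suppose $f$ exists. If $f\le Cg$ a.e., then integrating out all coordinates except $Z_{11}$ gives $f_{11}\le Cg_{11}$ a.e. for the marginal densities. So it suffices to show that, for the marginal of $Z_{11}$, the ratio $P_f(A)/P_g(A)$ is $\Omega_\alpha(e^{c_\alpha M})$ for some set $A\subset\C$ of positive Lebesgue measure. This uses $\operatorname{ess\,sup} f/g\ge \operatorname{ess\,sup} f_{11}/g_{11}\ge P_f(A)/P_g(A)$. Under $g$ we have $Z_{11}=\Gb_{11}\dsim\CN(0,2)$, so $P_g(|Z_{11}|\ge t)=e^{-t^2/2}$. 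Under $f$ we have $Z_{11}=MK^{-1/2}\sum_{j\le K}u_j^2$, where $u=(u_1,\dots,u_M)$ is the first row of $U$, uniform on the unit sphere of $\C^M$. In particular $|Z_{11}|\le M/\sqrt K=\sqrt{M/\alpha}$, so $Z_{11}$ lives on the scale $\sqrt M$, which is exactly where the Gaussian tail is $e^{-\Theta(M)}$. This is why an exponential-in-$M$ gap is plausible regardless of $N$.

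The key steps, in order, are as follows.
\begin{enumerate}[(i)]
\item Take $A=\{|Z_{11}|\ge \theta\sqrt{M/\alpha}\}$ for a parameter $\theta\in(0,1)$ to be optimized. Then $P_g(A)=\exp(-\theta^2M/(2\alpha))$.
\item Lower bound $P_f(A)$. Write $u=(\sqrt{R}\,v,\sqrt{1-R}\,w)$ with $v,w$ uniform on the unit spheres of $\C^K$ and $\C^{M-K}$, and $R\dsim\Beta(K,M-K)$ independent of them. Then $A\supseteq\{R\ge\theta/\rho\}\cap\{|\sum_j v_j^2|\ge\rho\}$ for any $\rho\in(\theta,1]$.
\item Estimate the two factors. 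By Stirling for the Beta density, $P(R\ge r)=\exp(-M[r\log\frac r\alpha+(1-r)\log\frac{1-r}{1-\alpha}]+o(M))$. For the phase-alignment event, write $v=(x+iy)/\|(x,y)\|$ with $x,y$ independent real Gaussian vectors in $\R^K$. Then $|\sum v_j^2|=|\,\|x\|^2-\|y\|^2+2i\,x\cdot y\,|/(\|x\|^2+\|y\|^2)$, and a Gaussian large-deviation computation (e.g.\ forcing $\|y\|^2\le \epsilon\|x\|^2$) gives $P(|\sum v_j^2|\ge\rho)=e^{-K J(\rho)+o(K)}$ with an explicit rate $J$.
\end{enumerate}

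Combining these, $P_f(A)/P_g(A)\ge\exp\!\big(M[\theta^2/(2\alpha)-I_\alpha(\theta/\rho)-\alpha J(\rho)]+o(M)\big)$. The remaining task is to show that, for $\alpha<1/2$, some choice of $\theta,\rho$ makes the bracket strictly positive; this positive bracket is $c_\alpha$. A first heuristic check is $\theta=\rho\to1$ at the extreme point. There the Gaussian cost is $M/(2\alpha)$. The spherical cost is governed by volume: the $u$'s with $Z_{11}$ near its maximum form a neighbourhood of the $K$-dimensional real set $\{e^{i\phi}v:v\in\R^K\}$ inside a sphere of real dimension $2M-1$. I expect the competition between $1/(2\alpha)$ and these codimension costs ($M-K$ from the mass constraint and $K$ from the phases) to produce exactly the threshold $\alpha<1/2$. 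So I would tune $\theta$ near $1$ and verify the inequality directly. The quantitative statement $\Omega_\alpha(e^{c_\alpha M})$ uses the hypotheses $K\ge4$ and $N+K\le M$ to make the Beta and Gaussian density estimates uniform. The ``in particular'' claim then follows because $c_\alpha$ and the implicit constant can be chosen uniformly for $\alpha\in[\eta,1/2-\eta]$, by continuity of the rate functions in $\alpha$.

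The main obstacle is step (iii) together with the optimization. I need sharp enough (exponent-exact) lower bounds on the joint event that the first $K$ coordinates carry most of the mass \emph{and} their squares are phase-aligned. I also need to check that the optimized exponent really is positive precisely when $\alpha<1/2$, rather than only for some smaller range. If the crude product bound in (ii) loses too much, I would instead compute the density of $Z_{11}$ near its supremum $M/\sqrt K$ directly via the polar decomposition above. I would then compare it pointwise with $g_{11}(z)\propto e^{-|z|^2/2}$ at $|z|=\theta\sqrt{M/\alpha}$.
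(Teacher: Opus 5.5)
Your reduction is sound. For the cylinder set $A=\{|Z_{11}|\ge\theta M/\sqrt K\}$ one has $P_f(A)\le(\operatorname{ess\,sup}f/g)\,P_g(A)$ directly. Writing $Z_{11}=MK^{-1/2}RS$, with $R\dsim\Beta(K,M-K)$ independent of $S=\sum_{j\le K}v_j^2$, is exactly the paper's decomposition.

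The gap is that you stop at the one step that carries the content of the proposition: exhibiting $(\theta,\rho)$ with a strictly positive exponent for \emph{every} $\alpha<1/2$. The ingredients and heuristic you propose for that step are wrong.

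\begin{itemize}
\item \textbf{Beta rate.} The Beta density is proportional to $x^{K-1}(1-x)^{M-K-1}$, so $\frac1M\log P(R\ge r)\to-\bigl[\alpha\log\frac{\alpha}{r}+(1-\alpha)\log\frac{1-\alpha}{1-r}\bigr]$ for $r>\alpha$. The rate is the relative entropy $D(\alpha\,\|\,r)$, not $D(r\,\|\,\alpha)$ as you wrote, and it diverges as $r\to1$. By the symmetry $D(\alpha\|1-\alpha)=D(1-\alpha\|\alpha)$, your formula happens to give the right value at the optimizer below. It is false in general, though; for example, it stays bounded as $r\to1$.
\item \textbf{Phase cost.} This is explicit. $S$ has density $\frac{K-1}{2\pi}(1-|z|^2)^{(K-3)/2}$ on the unit disk, so $P(|S|\ge\rho)=(1-\rho^2)^{(K-1)/2}$ exactly and $J(\rho)=-\frac12\log(1-\rho^2)$. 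This also diverges as $\rho\to1$.
\item \textbf{Extreme-point heuristic.} Your plan to take $\theta=\rho$ near $1$, based on codimension counting at the extreme point, therefore fails. Near $|Z_{11}|=M/\sqrt K$ both costs blow up, while the Gaussian gain $\theta^2/(2\alpha)$ stays at most $1/(2\alpha)$. So the bracket tends to $-\infty$.
\end{itemize}

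The correct optimization is of $\frac{\theta^2}{2\alpha}-D(\alpha\|r)+\frac{\alpha}{2}\log(1-\theta^2/r^2)$. Its stationarity conditions are $r^2-\theta^2=\alpha^2$ and $r(1-r)=\alpha(1-\alpha)$. The nontrivial solution is $r=1-\alpha$, $\theta=\sqrt{1-2\alpha}$, $\rho=\theta/r$, so that $1-\rho^2=\alpha^2/(1-\alpha)^2$.

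This point is admissible exactly when $\theta^2=(1-\alpha)^2-\alpha^2>0$, i.e.\ $\alpha<1/2$; that is where the threshold comes from. Note that $\theta\to0$, not $1$, as $\alpha\to1/2$.

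To finish, bound $P(R\ge1-\alpha)$ below by the integral of the Beta density over $[1-\alpha,1-\alpha+1/M]$, using Stirling for $B(K,M-K)$. Combined with the exact formula for $P(|S|\ge\rho)$, this gives
\[
\frac{P_f(A)}{P_g(A)}\ge \frac{c_\alpha}{\sqrt M}\exp\Bigl(M\Bigl[\frac{1-2\alpha}{2\alpha}+(1-\alpha)\log\frac{\alpha}{1-\alpha}\Bigr]\Bigr).
\]
This is the same exponent as in the paper. It is positive for $\alpha<1/2$, since with $y=(1-\alpha)/\alpha>1$ it equals $(1-\alpha)\bigl[\frac{y^2-1}{2y}-\log y\bigr]$.

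With this repair your route is a valid and somewhat lighter alternative. The paper writes the exact density of $RS$ as a one-dimensional integral and applies Laplace's method at the same point, $x_*=1-\alpha+O(1/M)$ and $M^{-1}K^{1/2}|z_*|=\sqrt{1-2\alpha}+O(1/M)$. That yields the pointwise ratio with no polynomial loss. Your tail-set version needs only two one-dimensional tail bounds and loses a harmless $M^{-1/2}$ factor, which is absorbed by shrinking $c_\alpha$. The explicit bounds also make the uniformity over $\alpha\in[\eta,1/2-\eta]$ immediate.
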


\begin{rmk}
This result is perhaps counterintuitive, since intuitively, smaller $K\propto M$ should not make the submatrix $U_{NK}U_{NK}^T$ look \emph{less} Gaussian than the $K=M$ case, which sees the full orthonormality requirements of $U$. In the $K=M$ case, the bound $f(Z)\le (1+o(1))g(Z)$ holds \cite{shou2026proof} for $N=o(K^{1/3})$. However, the pointwise density bound \eqref{eqn:densityratio} captures rare tail behavior, which is not necessarily captured by TVD, and which plausibly can differ for smaller $K$ where $MK^{-1/2}U_{NK}U_{NK}^T$ may have different tail behavior.
\end{rmk}

Proposition~\ref{prop:inf} means we cannot use the density-based rejection sampling method of \cite[Lemmas 5.7, 5.8]{aa}, which requires $f(Z)\le (1+\delta)g(Z)$, to generate instances of unitaries $U$ with $\Gb$ hidden as an instance of $MK^{-1/2}(UI_KU^T)_{S^*,S^*}$ when $K<M/2$. One could perhaps try to prove the required density bound for $K>M/2$ (we know it at least holds for $K=M$ and $N=o(K^{1/3})$), or try to truncate the distribution $\Gb$ to remove tail behavior, but both of these approaches are model-specific and also likely involve working with complicated density functions.
Instead, we adapt the argument of \cite[\S5.2]{aa} to generate \emph{approximately} Haar instances of unitaries $U$ and approximately uniform locations $S$ to hide $\Gb$, using postselection with an \np\ oracle. This will be enough to complete the hardness reduction, under a finite-precision implementation assumption, giving Theorem~\ref{thm:hardness} below.
To state the theorem, first define
\begin{problem}[$|\mathrm{GHE}|_\pm^2$]\label{prob:ghe}
Let $N\in2\N$.
Given as input a matrix $X\dsim\gsym$, together with error bounds $\varepsilon,\delta>0$, estimate $|\Haf(X)|^2$ to within additive error $\pm\varepsilon\cdot\E|\Haf(X)|^2$ with probability at least $1-\delta$ over $X$ and the algorithm's randomness in $\poly(N,1/\varepsilon,1/\delta)$ time.
\end{problem}

For $X\dsim\gsym$, one can use independence of entries above the diagonal to quickly calculate that
\begin{align}
\E|\Haf(X)|^2&=(N-1)!!=\frac{N!}{(N/2)!2^{N/2}}\sim\frac{\sqrt{2}N^{N/2}}{e^{N/2}}.
\end{align}
As is usual \cite[\S2]{aa}, it will be understood that all entries of $X$ are rounded to polynomially many bits of precision; here we allow $\poly(M,1/\delta,1/\varepsilon)$ bits of precision, for a sufficiently large fixed polynomial.
To formalize this in our case, we will make an assumption on finite-precision implementation, stated later precisely as Assumption~\ref{finite} in Section~\ref{sec:hardness}. The assumption essentially says that we can approximate Haar random $U$ by finite-precision descriptions $v=v_\xi$, and also $(UI_KU^T)_{S,S}$ by an efficiently computable finite-precision matrix $\hat Z$, using high enough numerical precision. We expect this assumption is true and that it can be proved by careful finite-precision accounting and Gram-Schmidt or QR factorization.

Under this finite-precision implementation assumption, we prove the analogue of the main boson sampling result of \cite[Theorem 1.3]{aa}, for Gaussian boson sampling with essentially arbitrary number of squeezed input modes $K$.
We refer to \cite{complexityzoo} for definitions of the standard complexity classes \np, \bpp\ (bounded-error probabilistic polynomial-time), and \fbpp\ (the function/search analogue of \bpp, which searches for a witness to a relation in probabilistic polynomial time).
The notation $\fbpp^\np$ means \fbpp\ with access to an \np\ oracle.
\begin{thm}[main hardness result]\label{thm:hardness}
Let the probability distribution $\mathcal D_A$ be the output of a Gaussian boson sampling experiment $A=A(v,K,s)$, for $v$ a finite-precision description of a linear optical unitary $U=U(v)$, and $K,s$ input squeezing parameters.
Suppose there exists a classical algorithm $C$ which is able to approximately sample from limited instances of $A$, say for each $M\in\N$, only those with $K=K(M)$ equally squeezed input modes for some given\footnote{We assume the sequence $K(M)$ is efficiently computable.} sequence $K(M)$ with $M=O(\poly(K(M)))$, and all small squeezing parameters $s=O(K^{-1/4})$.
More precisely, the algorithm $C$ takes as input a description of such $A$ as well as an error bound $\varepsilon$, and samples from a probability distribution $\mathcal D'_A$ such that $\|\mathcal D'_A-\mathcal D_A\|_\tv\le\varepsilon$ in $\poly(|A|,1/\varepsilon)$ time, where $|A|$ denotes the length of the description\footnote{As in \cite{aa}, and as discussed more in Section~\ref{sec:hardness}, it will be understood that all entries in $A$ are rounded to $\poly(M,1/\delta,1/\varepsilon)$ bits of precision.} of $A$.
Then under the finite-precision implementation Assumption~\ref{finite}, the problem $|\mathrm{GHE}|_\pm^2$ is solvable in $\fbpp^\np$.
In other words, if we treat $C$ as a black box, then $|\mathrm{GHE}|_\pm^2\in\fbpp^{\np^C}$.
\end{thm}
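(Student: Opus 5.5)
The plan is to adapt the Stockmeyer-counting reduction of \cite[\S5]{aa}, making two changes. The first is the scaling: the hafnian replaces the permanent, and the normalization $MK^{-1/2}$ together with the squeezing prefactor $\tanh^N(s)/\cosh^K(s)$ replaces $M^{-N}$. The second is the hiding step: since Proposition~\ref{prop:inf} rules out the density-based rejection sampler of \cite[Lemma 5.8]{aa}, an approximate instance generator built from postselection with the \np\ oracle replaces it.

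\textbf{Parameters.} Given $X\dsim\gsym$ of even size $N$ and $\varepsilon,\delta$, we pick the smallest $M$ with $K=K(M)$ satisfying $N^2/K\le c\,\delta^2\varepsilon^2$, say. This is polynomial in $N,1/\varepsilon,1/\delta$ because $M=O(\poly(K(M)))$. Theorem~\ref{thm:hiding} then gives $d_\tv(MK^{-1/2}U_{NK}U_{NK}^T,\Gb)=O(\delta\varepsilon)$. By permutation invariance of Haar measure, the same bound holds for $MK^{-1/2}(UI_KU^T)_{S,S}$ with $S$ a uniformly random size-$N$ subset. We choose $s$ with $K\sinh^2 s=\Theta(N)$, so that $s=O(K^{-1/4})$ when $N=O(\sqrt K)$, and so that outcomes with total photon number $N$ have inverse-polynomial probability. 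Collision-freeness holds with high probability since $N=o(\sqrt M)$.

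\textbf{Approximate instance generation.} Under Assumption~\ref{finite}, Haar $U$ is approximated by $v_\xi$, with $\xi$ a uniformly random string of polynomial length, and $\hat Z(\xi,S)$ is an efficiently computable rounding of $MK^{-1/2}(U(v_\xi)I_KU(v_\xi)^T)_{S,S}$. The joint law of $(\xi,S,\hat Z)$ is then close in TV to (Haar $U$, uniform $S$, rounded hidden matrix). Given the rounded input $\hat X$, we use the \np-oracle uniform generation of Bellare--Goldreich--Petrank/Jerrum--Valiant--Vazirani to sample $(\xi,S)$ approximately uniformly from the set $\{(\xi,S):\hat Z(\xi,S)=\hat X\}$; this is postselection on the event $\hat Z=\hat X$. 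The key claim is the following. If $\hat X$ is distributed as the rounding of $\Gb$, the resulting triple $(\xi,S,\hat X)$ has law within $d_\tv(\text{rounded hidden matrix},\text{rounded }\Gb)+o(1)\le O(\delta\varepsilon)$ of the true joint law. This holds because resampling the conditional given the first marginal preserves the coupling, and rounding does not increase TV.

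\textbf{Averaging and counting.} In particular, $U(v_\xi)$ is approximately Haar and $S$ is approximately uniform given $U$, so $S$ is hidden from $C$. We run $C$ on $A(v_\xi,K,s)$ with error $\varepsilon'=\varepsilon\delta/\poly$ and use Stockmeyer counting in $\fbpp^{\np^C}$ to estimate $q_S=\Pr_{\mathcal D'_A}[\mathbf n=\oneb_S]$ to multiplicative accuracy. The Markov argument of \cite[Lemma 4.3]{aa} shows that for a random location $S$ we have $|q_S-p_S|\le \varepsilon\,\E p_S$ with probability $1-\delta$, where $p_S=\frac{\tanh^N s}{\cosh^K s}(K^{1/2}/M)^N|\Haf(\hat X)|^2$ by \eqref{eqn:hprob}. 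Rescaling by the known prefactor and dividing by $\E|\Haf(X)|^2=(N-1)!!$ gives the required additive estimate. The rounding error of $\hat X$ versus $X$ in the hafnian is controlled by choosing enough bits of precision.

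\textbf{Main obstacle.} I expect the hardest part to be the instance-generation step. We must check that conditioning on the discrete event $\hat Z=\hat X$ is efficiently samplable with the \np\ oracle even when that event has exponentially small probability. This requires the uniform generator to work relative to a polynomial-time predicate, which Assumption~\ref{finite} supplies. We must also check that the TV closeness of Theorem~\ref{thm:hiding} transfers to the joint law of (instance, hidden location, input) without any pointwise density ratio bound. I would first prove a general lemma: if $\mathcal L(Z)$ and $\mathcal L(W)$ are within TV $\eta$, then sampling $W$ and then drawing from the conditional law of $Y$ given $Z=W$ produces a pair within TV $\eta$ of $(Y,Z)$. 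With this lemma the remaining bookkeeping is as in \cite{aa}.
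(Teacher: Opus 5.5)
Your architecture is the paper's. You postselect on a rounded cell via uniform \np-witness generation, your conditional-resampling lemma is exactly step~4 of the proof of Lemma~\ref{lem:instance}, and you finish with the Markov averaging and Stockmeyer counting of \cite{aa}. However, the squeezing choice fails as stated. The Markov/Stockmeyer step yields additive error about $\varepsilon\binom{M}{N}^{-1}$ on $p_{S^*}$, i.e.\ $\varepsilon\binom{M}{N}^{-1}\frac{\cosh^K s}{\tanh^N s}\frac{M^N}{K^{N/2}}$ on $|\Haf(X)|^2$. By \eqref{eqn:pn}, this equals $\varepsilon(N-1)!!/\P[N]$ up to a factor $e^{O(N^2/K)}$. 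So your criterion (inverse-polynomial $\P[N]$) is the right one, but $K\sinh^2 s=\Theta(N)$ does not meet it. If $K\sinh^2 s=cN$ with fixed $c\neq1$, then $\cosh^K s/\tanh^N s$ exceeds its minimum by the factor $\exp[\frac N2(c-1-\log c)+O(N^2/K)]$; the minimum is attained at $K\sinh^2 s=N$, which is the paper's choice \eqref{eqn:s-choice}. Hence $\P[N]=e^{-\Theta(N)}$, and the error is exponentially larger than $\varepsilon(N-1)!!$, which no polynomial rescaling of $\varepsilon$ can absorb. You need $|K\sinh^2 s-N|=O(\sqrt{N\log N})$; the exact choice gives $1/\P[N]\approx\sqrt{\pi N}$. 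This direct comparison should also replace your step ``$|q_S-p_S|\le\varepsilon\E p_S$, then divide by $(N-1)!!$''. That step tacitly needs $\E_{U,S}|\Haf(MK^{-1/2}(UI_KU^T)_{S,S})|^2=O((N-1)!!)$, which TV closeness alone does not give.

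The second issue is the rounding in the postselection. Assumption~\ref{finite}(b) only provides an $\eta/2$-accurate matrix $\hat Z(v,S)$, not the exact grid rounding of $(U(v)I_KU(v)^T)_{S,S}$. At the scale of the computational precision, the cell of the computed matrix need not agree with that of the true one, and it can disagree with probability of order one. So postselecting on exact equality $\hat Z=\hat X$ at a single scale is not controlled by Theorem~\ref{thm:hiding}, and your claim that the law of $\hat Z(\xi,S)$ is TV-close to the rounded hidden matrix is unsupported. The paper separates scales. It computes at precision $\eta\ll\gamma$ and postselects on the coarse event $R_\gamma(\hat Z(v,S))=R_\gamma(\hat X')$. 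It then shows the $\gamma$-cells of the computed and true matrices agree except with probability $O(\delta)$, by combining the coupling \eqref{eqn:c-final} with the boundary estimate Lemma~\ref{lem:boundary}. That estimate is proved for the Gaussian $X'$ and transferred to the unitary submatrix through the hiding TV bound, which gives \eqref{eqn:Rg-goal}. With these two repairs, your resampling lemma applied to the $\gamma$-cell variable gives Lemma~\ref{lem:instance}(ii), and the rest of your outline goes through.
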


\begin{conj}\label{conj:ghe}
$|\mathrm{GHE}|_\pm^2$ is \sharpp-hard, in the sense that if $\mathcal O$ is any oracle that solves $|\mathrm{GHE}|_\pm^2$, then $\sans{P}^{\#\sans{P}}\subseteq\bpp^{\mathcal O}$.
\end{conj}
Conjecture~\ref{conj:ghe} is analogous to the conjecture for additive approximation of squared permanents of complex matrices, $|\mathrm{GPE}|_\pm^2$, being \sharpp-hard \cite{aa}. Note that in the formulation of $|\mathrm{GHE}|_\pm^2$ here as well as in $|\mathrm{GPE}|_\pm^2$, the additive error size is $\varepsilon$ times the average value of the quantity to estimate ($|\Haf(X)|^2$ here, $|\Per(X)|^2$ for $|\mathrm{GPE}|_\pm^2$). This makes Problem~\ref{prob:ghe} the natural hafnian analogue of $|\mathrm{GPE}|_\pm^2$ from \cite{aa} (previous hafnian hardness problems in the context of GBS used the more complicated matrix $XX^T$ for $X$ an $N\times K$ matrix of iid complex Gaussians, and also did not express the error bound in terms of the hafnian moments).
If Conjecture~\ref{conj:ghe} holds, then the existence of such a classical algorithm $C$ in Theorem~\ref{thm:hardness} implies collapse of the polynomial hierarchy by Toda's theorem \cite{toda1991pp}.

\subsection{Outline}
The rest of the paper is organized as follows.
Additionally, we note that generic constants $C,c$ may change from line to line throughout the paper.
\begin{itemize}
\item In Section~\ref{sec:hiding}, we prove a simpler version of the hiding property Theorem~\ref{thm:hiding}, though which holds only for $N=o(K^{1/3})$ in general. This proof is less technical, and is enough to perform the subsequent hardness reduction. The extension to $N=o(\sqrt{K})$ is proved in Appendix~\ref{sec:hiding2}.
\item In Section~\ref{sec:ggt}, we prove Theorem~\ref{thm:ggt} on $\gsym$ vs $\ggtnk$.
\item In Section~\ref{sec:hardness}, we prove the hardness reduction Theorem~\ref{thm:hardness}.
\item In Section~\ref{sec:inf}, we prove Proposition~\ref{prop:inf} on the lack of a density ratio bound for $K=\alpha M$, $0<\alpha<1/2$.
\item In Appendix~\ref{sec:hiding2}, we prove the full Theorem~\ref{thm:hiding} up to submatrix size $N=o(\sqrt{K})$.
\end{itemize}

\section{Proof of \texorpdfstring{$N=o(K^{1/3})$}{N=o(K^(1/3))} TVD hiding}\label{sec:hiding}

In this section we prove a weaker version of Theorem~\ref{thm:hiding}; namely, 
\begin{prop}[hiding for $N=o(K^{1/3})$]\label{prop:hiding-3}
Let $N\le K\le M$, let $U_{NK}$ be the top left $N\times K$ submatrix of an $M\times M$ Haar random unitary matrix $U$, and let $\Gb\dsim\gsym$. Then as $K\to\infty$,
\begin{align}\label{eqn:tvhiding-3}
d_\tv(MK^{-1/2}U_{NK}U_{NK}^T,\Gb)&=O\left(\sqrt{\frac{N^3}{K}}\right),
\end{align}
which is $o(1)$ if $N=o(K^{1/3})$.
\end{prop}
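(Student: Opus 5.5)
The plan is to reduce the case $K<M$ to the case $K=M$, where $U_{NK}U_{NK}^T$ is a submatrix of a COE matrix, through a polar-type factorization. Write $H=U_{NK}U_{NK}^*$, an $N\times N$ positive matrix that is almost surely invertible when $N\le K$, and set $W=H^{-1/2}U_{NK}$, an $N\times K$ matrix with orthonormal rows. Haar measure is invariant under right multiplication by $\mathrm{diag}(V,I_{M-K})$ with $V\in\U(K)$. Under this action $H$ is unchanged and $W\mapsto WV$. Hence, conditionally on $H$, the matrix $W$ is Haar-distributed on the Stiefel manifold. So $W$ is independent of $H$, and $W$ is distributed as the first $N$ rows $V_{N\cdot}$ of a $K\times K$ Haar unitary $V$. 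This gives
\begin{align*}
MK^{-1/2}U_{NK}U_{NK}^T\rvsim Y\,S\,Y^T,\qquad Y=(M/K)^{1/2}H^{1/2},\qquad S=\sqrt{K}\,(V_{N\cdot}V_{N\cdot}^T),
\end{align*}
with $Y$ and $S$ independent. Here $S$ is exactly $K^{1/2}$ times the top-left $N\times N$ block of the $K\times K$ COE matrix $VV^T$, which is the $K=M$ instance.

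The first step is to invoke the $K=M$ hiding result of \cite{shou2026proof}, with $K$ playing the role of the ambient dimension. This gives $d_\tv(S,\Gb)=O(\sqrt{N^3/K})$, or at least $o(1)$ at this rate, for instance through the density bound $f\le(1+o(1))g$ valid for $N=o(K^{1/3})$. Since $Y$ is independent of $S$, conditioning on $Y$ and applying the same measurable map to both sides gives $d_\tv(YSY^T,Y\Gb Y^T)\le d_\tv(S,\Gb)$. By the triangle inequality it then remains to bound $d_\tv(Y\Gb Y^T,\Gb)\le \E_Y\, d_\tv(Y\Gb Y^T,\Gb\mid Y)$.

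The second step is a Gaussian computation. The density of $\Gb$ is proportional to $\exp(-\tfrac12\Tr ZZ^*)$, so it is invariant under $Z\mapsto OZO^T$ for unitary $O$. Diagonalize $Y=O\,\mathrm{diag}(y_i)\,O^*$, where the $y_i>0$ are the eigenvalues of $(M/K)^{1/2}H^{1/2}$. The map $Z\mapsto YZY^T$ then becomes coordinatewise scaling of the entry $(i,j)$, $i\le j$, by $y_iy_j$. Conditionally on $Y$, the KL divergence between two circular complex Gaussians is therefore
\begin{align*}
\dkl\bigl(Y\Gb Y^T\,\|\,\Gb\bigr)=\sum_{i\le j}\bigl(y_i^2y_j^2-1-\log(y_i^2y_j^2)\bigr)\lesssim \sum_{i\le j}(y_iy_j-1)^2\lesssim N\sum_i(y_i-1)^2,
\end{align*}
valid on the event that all $|y_i-1|\le 1/2$. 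Pinsker's inequality turns this into $d_\tv\lesssim \sqrt{N\|Y-I\|_F^2}$.

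The third step, and the main technical obstacle, is the concentration of $H$. I need $\E\|(M/K)H-I\|_F^2=O(N^2/K)$, together with a tail bound showing that $\|(M/K)H-I\|_{\mathrm{op}}\le 1/2$ except with probability $O(\sqrt{N^3/K})$. I would get these from second and fourth moments of Haar unitary entries (Weingarten calculus), or by writing the rows of $U$ as Gram--Schmidt-orthonormalized Gaussian vectors. Each entry of $(M/K)H$ has mean $\delta_{ij}$ and variance $O(1/K)$ uniformly in $K\le M$. The fluctuations are even smaller as $K\to M$, since $H=I$ when $K=M$. Given these bounds, $\|Y-I\|_F\lesssim\|(M/K)H-I\|_F$ on the good event, and Jensen gives
\begin{align*}
\E_Y\, d_\tv\lesssim \sqrt{N\cdot N^2/K}+\P[\text{bad event}]=O\bigl(\sqrt{N^3/K}\bigr).
\end{align*}
Combining this with the first step yields \eqref{eqn:tvhiding-3}.
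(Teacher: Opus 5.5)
Your proposal is correct and is essentially the paper's proof. Your $H=U_{NK}U_{NK}^\dagger$ is exactly the paper's $P_0=I_N-E_NVV^\dagger E_N^\dagger$, so your $Y$ is its $Q=(MK^{-1}P_0)^{1/2}$. The paper runs the same three steps:
\begin{enumerate}
\item the polar reduction to a $K\times K$ Haar unitary, combined with the cited $K=M$ bound (the paper quotes it as $O(N/\sqrt{K})$, and you do need a rate like this rather than just $o(1)$ to get the stated $O(\sqrt{N^3/K})$);
\item a conditional Gaussian KL/Pinsker estimate on the event $\|\delta\|\le 1/2$;
\item Weingarten second moments giving $\E\Tr(\delta^2)=O(N^2/K)$, plus a Chebyshev tail.
\end{enumerate}
The differences are cosmetic. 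You get independence of $W$ and $H$ from right-invariance of Haar measure under $\mathrm{U}(K)\oplus I_{M-K}$, whereas the paper conditions on the span $\mathcal V$ of the last $M-K$ columns. You average the conditional TVD and then apply Pinsker and Jensen, whereas the paper applies Jensen to the KL divergence of the mixture before Pinsker. Both give the same bound.
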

This is strictly weaker than the bound and $N=o(K^{1/2})$ allowed in Theorem~\ref{thm:hiding}, but the proof of Proposition~\ref{prop:hiding-3} will be simpler, and this statement is enough to go through the usual hardness reduction with some minor parameter adjustments. (Recall, the original boson sampling hardness reduction of \cite{aa} had $N=o(M^{1/5})$; proving TVD closeness with any inverse polynomial power is sufficient.) The key idea of the proof of Proposition~\ref{prop:hiding-3} or Theorem~\ref{thm:hiding} is Lemma~\ref{lem:reduce}, which is presented in this section.
The proof of the stronger bound in Theorem~\ref{thm:hiding} differs only in a later technical estimate, which we give in Appendix~\ref{sec:hiding2}.

\vspace{2mm}
It suffices to prove \eqref{eqn:tvhiding-3} for $N^3=O(K)$; otherwise the bound is trivial.
Let $\mathcal V$ be the span of the last $M-K$ columns of $U$. The main idea is the following: The first $K$ columns of $U$ form an orthonormal basis for $\mathcal V^\perp$, and conditioned on $\mathcal V$ they form a Haar random basis for $\mathcal V^\perp$ \cite[\S1.2]{Meckes-book}. Since $\mathcal V^\perp$ has dimension $K$, this can be used to relate the distribution of $U_{NK}U_{NK}^T$ to the distribution of a matrix transformation of $W_{NK}W_{NK}^T$, where $W$ is a $K\times K$ Haar random unitary [Eq.~\eqref{eqn:uw}]. The matrix $W$ can be thought of as acting as the source of randomness for the random basis of $\mathcal V^\perp$. The distribution of $W_{NK}W_{NK}^T$ is the setting considered in \cite{shou2026proof}, and so it will be possible to obtain Theorem~\ref{thm:hiding} for $K<M$ by reducing to the hiding case with $K=M$ proved there.

We start with the first part, on relating the distribution of $U_{NK}U_{NK}^T$ to one involving $W_{NK}W_{NK}^T$ for $W$ a $K\times K$ Haar random unitary.
\begin{lem}\label{lem:reduce}
Let $U$ be an $M\times M$ Haar random matrix, and let $U_{NK}$ be its top left $N\times K$ submatrix with $N\le K$. Denote by $V$ the $M\times(M-K)$ matrix consisting of the last $M-K$ columns of $U$, and define the $N\times N$ matrix $P_0:=I_N-E_NVV^\dagger E_N^\dagger$, for $E_N=(I_N\;0_{M-N})$. Then
\begin{align}\label{eqn:uw}
U_{NK}\rvsim P_0^{1/2}W_{NK},\quad\text{ and }\quad U_{NK}U_{NK}^T\rvsim P_0^{1/2}W_{NK}W_{NK}^T(P_0^{1/2})^T,
\end{align}
for $W$ an independent $K\times K$ Haar random unitary and $W_{NK}$ its top $N\times K$ submatrix.
\end{lem}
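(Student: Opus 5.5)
Let $A$ denote the $M\times K$ matrix formed by the first $K$ columns of $U$, so $U=(A\;V)$ and $U_{NK}=E_NA$. The plan is to condition on $V$, use the invariance of Haar measure to write $A$ as a fixed orthonormal basis of $\mathcal V^\perp$ times an independent Haar $K\times K$ unitary, and then identify the law of the resulting $N\times K$ block.

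\emph{Step 1: conditional law of $A$.} Right multiplication of $U$ by $\operatorname{diag}(W^T,I_{M-K})$, with $W\in\U(K)$, preserves Haar measure. It sends $(A,V)$ to $(AW^T,V)$. Hence, conditionally on $V$, the law of $A$ is invariant under $A\mapsto AW^T$. Now choose a measurable orthonormal basis $B=B(V)$ of $\mathcal V^\perp$, given as an $M\times K$ matrix with $B^\dagger B=I_K$. Then $A=BQ$ with $Q=B^\dagger A\in\U(K)$, and the conditional law of $Q$ is right-invariant. It is therefore Haar on $\U(K)$ and independent of $V$. Since $W\mapsto W^T$ preserves Haar measure, we may write
\begin{align*}
U_{NK}\rvsim E_NB\,W^T,
\end{align*}
with $W$ Haar on $\U(K)$ and independent of $V$.

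\emph{Step 2: reduce $E_NB$ to $P_0^{1/2}$ times a coisometry.} Set $X:=E_NB$, an $N\times K$ matrix. Because $BB^\dagger=I_M-VV^\dagger$, we get $XX^\dagger=E_N(I_M-VV^\dagger)E_N^\dagger=P_0$, so $P_0\ge0$. By the polar decomposition there is an $N\times K$ matrix $R$ with orthonormal rows ($RR^\dagger=I_N$, possible since $N\le K$) such that $X=P_0^{1/2}R$. On the event that $P_0$ is invertible, $R=P_0^{-1/2}X$ is a measurable function of $V$. This event has full measure, since $P_0$ is singular only when some nonzero vector of $\operatorname{span}(e_1,\dots,e_N)$ lies in $\mathcal V$, and for $N\le K$ that is a codimension event. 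I would handle it by a measurable selection, or simply note it is null. Complete $R$ to a unitary $\tilde R\in\U(K)$ chosen measurably in $V$; its first $N$ rows are $R$. Then
\begin{align*}
E_NBW^T=P_0^{1/2}E'_N\tilde RW^T,
\end{align*}
where $E'_N$ denotes the first $N$ rows of $I_K$. Conditionally on $V$, the matrix $\tilde RW^T$ is Haar (left invariance) and independent of $V$. Renaming it $W$ gives $U_{NK}\rvsim P_0^{1/2}W_{NK}$ jointly with $V$, with $W$ independent of $V$.

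\emph{Step 3: the product identity.} The second claim in \eqref{eqn:uw} follows by applying the map $Y\mapsto YY^T$ to the joint equality in law. Here it matters that the identity holds jointly with $P_0$, not merely marginally.

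The only delicate point is measurability: choosing $B$, $R$ and $\tilde R$ measurably in $V$ and dealing with the null set where $P_0$ is singular. I would handle this with a Gram–Schmidt construction on generic $V$.
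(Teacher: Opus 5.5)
Your proposal is correct and follows the paper's proof essentially step for step: condition on $\mathcal V$, write the first $K$ columns as a fixed orthonormal basis of $\mathcal V^\perp$ times an independent Haar unitary, polar-decompose $E_NA_0=P_0^{1/2}R$, then complete $R$ to a unitary and absorb it into $W$ by Haar invariance. Your extra justifications are fine: you derive the conditional Haar property from right-invariance rather than citing it, and you identify the singular set of $P_0$ with $\operatorname{span}(e_1,\dots,e_N)\cap\mathcal V\neq\{0\}$ instead of using the paper's Gaussian rank argument; the product identity follows simply by pushing the marginal identity forward under $Y\mapsto YY^T$, so the joint-in-law version in your Step 3 is more than is needed.
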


\begin{proof}
Let $\mathcal V$ be the span of the last $M-K$ columns of $U$. Let $A_0$ be an $M\times K$ matrix whose columns form an orthonormal basis for $\mathcal V^\perp$, and let $W$ be an independent $K\times K$ Haar random unitary. 
Conditioned on $\mathcal V$, we have
\begin{align}
U_{NK}\rvsim E_NA_0W,
\end{align}
for $E_N=(I_N\;0_{M-N})$. Since the distribution of $W$ is invariant under unitary rotations, we would like to pass the $E_N$ through to $W$ (so that we can consider $W_{NK}$), but $A_0$ is rectangle-shaped so we have to do some manipulation.

Note that $A_0A_0^\dagger=I_M-VV^\dagger$ the projection onto $\mathcal V^\perp$,
and so $(E_NA_0)(E_NA_0)^\dagger=I_N-E_NVV^\dagger E_N^\dagger=P_0$. 
We can do polar decomposition (or SVD) to obtain $E_NA_0=P_0^{1/2}R$, for $R$ an $N\times K$ semiunitary matrix;\footnote{In general, $R$ need not be unique. However, in this case one can show $P_0$ is invertible almost surely (a.s.) since $N\le K$, so $R$ is unique a.s. 
To see this, let $A$ be the $M\times K$ matrix consisting of the first $K$ columns of $U$, so $P_0=(E_NA)(E_NA)^\dagger$. For $Z$ an $M\times K$ matrix of iid standard complex Gaussians, we have $E_NA\rvsim E_NZ(Z^\dagger Z)^{-1/2}$ \cite[\S1.2]{Meckes-book}. Note that $Z$ has rank $K$ a.s. (the probability of the next column being in the span of the previous columns is 0), so $\rank Z^\dagger Z=\rank Z=K$ and $Z^\dagger Z$ is invertible a.s. The matrix $E_NZ$ is an $N\times K$ matrix of iid standard complex Gaussians, so has rank $N$ a.s. Thus $\rank[(E_NA)(E_NA)^\dagger]=\rank(E_NA)=N$ a.s., so $P_0$ is invertible a.s.} 
i.e. $RR^\dagger=I_N$ and the rows of $R$ are orthonormal. 
Then we can extend the rows of $R$ to a full orthonormal basis of $\C^K$, which will be given by the rows of a $K\times K$ unitary $\xi$ (independent of $W$), with $R=(I_N\;0_{K-N})\xi$. Since $W$ is invariant under unitary multiplication, and $\xi$ is independent of $W$, conditioned on $\mathcal V$ we get
\begin{align*}
U_{NK}&\rvsim E_NA_0W=P_0^{1/2}(I_N\;0_{K-N})\xi W\\
&\rvsim P_0^{1/2}(I_N\;0_{K-N})W=P_0^{1/2}W_{NK},\numberthis
\end{align*}
where $W_{NK}$ is the top $N\times K$ submatrix of the $K\times K$ unitary $W$.
Thus we obtain \eqref{eqn:uw}.
\end{proof}

\begin{proof}[Proof of Proposition~\ref{prop:hiding-3}]
By Lemma~\ref{lem:reduce}, the distribution of $U_{NK}U_{NK}^T$ is the same as the distribution of $P_0^{1/2}W_{NK}W_{NK}^T(P_0^{1/2})^T$, where $P_0$ is defined as in the lemma and $W$ is an independent $K\times K$ Haar random unitary matrix.
By \cite{shou2026proof}, since $W_{NK}$ is the $N\times K$ submatrix for a $K\times K$ Haar unitary matrix, then for $\Gb\dsim\gsym$ and $N=O(\sqrt{K})$, 
\begin{align}
d_\tv(\sqrt{K}W_{NK}W_{NK}^T,\Gb)\le O(N/\sqrt{K}).
\end{align}
Letting $Q:=(MK^{-1}P_0)^{1/2}$, which is independent of $\Gb$ and $W$, we can then write
\begin{align*}
d_\tv(MK^{-1/2}U_{NK}U_{NK}^T,\Gb)&\le d_\tv(Q\sqrt{K}W_{NK}W_{NK}^TQ^T,Q\Gb Q^T)+ d_\tv(Q\Gb Q^T,\Gb)\\
&\le O(N/\sqrt{K})+d_\tv(Q\Gb Q^T,\Gb).\numberthis\label{eqn:triangle}
\end{align*}
So to prove the theorem it suffices to bound $d_\tv(Q\Gb Q^T,\Gb)$. 
To do this, we will show that $Q$ is typically close to the identity, which will make the TVD small. 

For fixed $Q=\qf$ (e.g. conditioned on $V$), both of the involved distributions $q\Gb q^T$ and $\Gb$ have explicit Gaussian densities, so we can directly estimate the TVD.
For fixed invertible $\qf$, the density of $\qf\Gb \qf^T$ over the space of $N\times N$ symmetric complex matrices is calculated by change of variables\footnote{The map $X\mapsto qXq^T$ is linear, and can be expressed as $(q\otimes q)\operatorname{vec}(X)\equiv\operatorname{vec}(qXq^T)$, where $\operatorname{vec}(X)$ is the vectorization of $X$ formed by stacking columns of $X$. To calculate the Jacobian of the map $X\mapsto qXq^T$, if $q$ is diagonalizable with eigenpairs $\{(\lambda_i,|u_i\rangle)\}_{i=1}^N$, take the eigenbasis $u_iu_j^T+u_ju_i^T=|u_i\rangle\langle \bar u_j|+|u_j\rangle\langle\bar u_i|$ for $i\le j$ of the transformation $X\mapsto qXq^T$, which gives (complex) Jacobian $\prod_{i\le j}\lambda_i\lambda_j=(\det q)^{N+1}$ and (real) Jacobian $|\det q|^{2(N+1)}$.} from $\Gb$ to be 
\begin{align}\label{eqn:fq}
f_\qf(Z)=c_N|\det \qf|^{-2(N+1)}e^{-\frac{1}{2}\Tr \bar \qf^{-1} Z^\dagger (\qf^{-1})^\dagger \qf^{-1}Z(\qf^{-1})^T},
\end{align}
where $c_N=2^{-N}\pi^{-N(N+1)/2}$ is the normalization constant for the density $c_Ne^{-\frac12\Tr Z^\dagger Z}$ of $\Gb$. 

Total variation distance can be bounded using the Kullback--Leibler (KL) divergence, or relative entropy, via Pinsker's inequality, 
\begin{align}\label{eqn:pinsker}
d_\tv(X,Y)&\le\sqrt{\frac12\dkl(X||Y)},
\end{align}
where the KL divergence is $\dkl(X||Y):=\int f(x)\log\frac{f(x)}{g(x)}\,dx$ for $f$ and $g$ the respective density functions for $X$ and $Y$.

For fixed invertible $\qf$, letting $s=\qf^\dagger \qf$, the KL divergence for $\qf\Gb \qf^T$ and $\Gb$ is
\begin{align*}
\dkl(\qf\Gb \qf^T||\Gb)&=\E_{Z\sim q\Gb q^T}[\log|\det \qf|^{-2(N+1)}-\frac12\Tr \bar \qf^{-1} Z^\dagger (\qf^{-1})^\dagger \qf^{-1}Z(\qf^{-1})^T+\frac12 \Tr Z^\dagger Z]\\
&=-(N+1)\log\det s-\frac12[N^2+N]+\frac12[(\Tr s)^2+\Tr(s^2)],\numberthis\label{eqn:dkls}
\end{align*}
using e.g. $\E_{Y\dsim\gsym}[\Tr(Y^\dagger sYs^T)]=(\Tr s)^2+\Tr(s^2)$ by direct expansion of the trace.
Write $s=\qf^\dagger\qf=I_N+\delta$. (We will later show that, for random $Q$,  $\delta$ is small with high probability, so $s$ is typically a small perturbation of the identity.) Expand
\begin{align*}
\dkl(\qf\Gb \qf^T||\Gb)&=-(N+1)\log\det(I_N+\delta)-\frac12[N^2+N]+\frac12[N^2+2N\Tr\delta+(\Tr\delta)^2+N+2\Tr\delta+\Tr(\delta^2)]\\
&=(N+1)[\Tr\delta-\Tr\log(I_N+\delta)]+\frac12(\Tr\delta)^2+\frac12\Tr(\delta^2)\\
&\le (N+3/2)\Tr(\delta^2)+\frac12(\Tr\delta)^2,\quad\text{ if }\|\delta\|\le1/2,\numberthis\label{eqn:dklq}
\end{align*}
using $x-\log(1+x)\le x^2$ for $-1/2\le x$.

Now we return to random $P_0$ and $Q=(MK^{-1}P_0)^{1/2}$.
Let $E_NV=:v$ be the top right $N\times (M-K)$ submatrix of $U$. We will show that $S=Q^\dagger Q=MK^{-1}(I_N-vv^\dagger)$ is typically a small perturbation of the identity. Letting $L:=M-K$ for notational convenience, then
\begin{align*}
S=\frac MKI_N-\frac MKvv^\dagger &=I_N+\frac{L}{K}I_N-\frac MKvv^\dagger.
\end{align*}
Then $\delta=\frac{L}{K}I_N-\frac MKvv^\dagger$, and so $\Tr\delta=\frac{LN}{K}-\frac{M}{K}\Tr(vv^\dagger)$, and 
$\delta^2=\frac{L^2}{K^2}I_N-\frac{2LM}{K^2}vv^\dagger+\frac{M^2}{K^2}vv^\dagger vv^\dagger$.
Taking the expectation over the Haar random unitary $U$, we see that $\E\Tr\delta=0$ since $\E\Tr(vv^\dagger)=\frac{LN}{M}$.
Also, using Weingarten calculus, see e.g. \cite{CollinsSniady2006,Collins2003},
\begin{align*}
\E(\Tr(vv^\dagger))^2=\sum_{i,k=1}^N\sum_{j,\ell=1}^{L}\E|v_{ij}|^2|v_{k\ell}|^2
&=\sum_{i,k=1}^N\sum_{j,\ell=1}^{L}\frac{1}{M^2-1}[1+\delta_{ik}\delta_{j\ell}]-\frac{1}{M(M^2-1)}[\delta_{j\ell}+\delta_{ik}]\\
&=\frac{N^2L^2+NL}{M^2-1}-\frac{N^2L+NL^2}{M(M^2-1)},\numberthis
\end{align*}
and
\begin{align*}
\E\Tr(vv^\dagger vv^\dagger)=\sum_{x,y=1}^N\sum_{i,j=1}^L\E[v_{xi}v_{yj}\bar v_{xj}\bar v_{yi}]
&=\sum_{x,y=1}^N\sum_{i,j=1}^L\frac{1}{M^2-1}[\delta_{ij}+\delta_{xy}]-\frac{1}{M(M^2-1)}[1+\delta_{xy}\delta_{ij}]\\
&=\frac{N^2L+NL^2}{M^2-1}-\frac{N^2L^2+NL}{M(M^2-1)}.\numberthis
\end{align*}

Then
\begin{align*}
\E(\Tr \delta)^2&=\frac{L^2N^2}{K^2}-\frac{2LMN}{K^2}\E\Tr(vv^\dagger)+\frac{M^2}{K^2}\E(\Tr(vv^\dagger))^2\\
&=\frac{LN(M-N)}{K(M^2-1)}
=O\left(\frac{LN}{KM}\right),\numberthis\label{eqn:trdelta2}\\
\E\Tr(\delta^2)&=\frac{L^2N}{K^2}-\frac{2LM}{K^2}\E\Tr(vv^\dagger)+\frac{M^2}{K^2}\E\Tr(vv^\dagger vv^\dagger)\\
&=\frac{LN(MN-1)}{K(M^2-1)}
=O\left(\frac{LN^2}{KM}\right).\numberthis\label{eqn:trdelta22}
\end{align*}
This also implies
\begin{align}\label{eqn:prob12}
\P[\|\delta\|>1/2]&\le 4\E\Tr(\delta^2)=O\left(\frac{LN^2}{KM}\right).
\end{align}

Returning to random $Q$, in order to invoke \eqref{eqn:dklq}, we need to ensure $\|\delta\|\le1/2$, which occurs with probability at least $1-O(N^2/K)$ by \eqref{eqn:prob12}.
For handling the event $\|\delta\|>1/2$, we want to go back to TVD since it is bounded, while KL-divergence need not be.
Let $\delta(q):=q^\dagger q-I_N$.
Define a random variable $\tilde Q$ via
\begin{align*}
\tilde Q=\begin{cases}Q,&\|\delta(Q)\|\le1/2\\
I,&\text{otherwise}
\end{cases};
\end{align*}
then $d_\tv(Q,\tilde Q)\le \P[Q\ne\tilde Q]=\P[\|\delta(Q)\|>1/2]$.
Since $\tilde Q^\dagger\tilde Q=I_N+\delta(\tilde Q)$
and $\delta(\tilde Q)=\delta(Q)\oneb_{\|\delta(Q)\|\le1/2}$, then
\begin{align}\label{eqn:traceqt2}
\begin{aligned}
\E_{\tilde Q}[(\Tr\delta)^2]&\le \E_Q[(\Tr\delta)^2]=O\left(\frac{LN}{KM}\right),\\
\E_{\tilde Q}[\Tr(\delta^2)]&\le \E_Q[\Tr(\delta^2)]=O\left(\frac{LN^2}{KM}\right).
\end{aligned}
\end{align}
We then estimate
\begin{align*}
d_\tv(Q\Gb Q^T,\Gb)&\le d_\tv(Q\Gb Q^T,\tilde Q\Gb\tilde Q^T)+d_\tv(\tilde Q\Gb\tilde Q^T,\Gb)\\
&\le \P[\|\delta(Q)\|>1/2] +\sqrt{\frac12\dkl(\tilde Q\Gb\tilde Q^T||\Gb)}. 
\numberthis\label{eqn:dtvsplit}
\end{align*}
For any $\tilde Q$ which is independent of $\Gb$, the density of $Y=\tilde Q\Gb \tilde Q^T$ can be directly seen to be $\E_{\tilde Q}[f_{\tilde Q}(Z)]$.
Thus using Jensen's inequality with $x\mapsto x\log x$ convex, and that $\dkl(\tilde q\Gb\tilde q^T||\Gb)$ is bounded for example via \eqref{eqn:dklq},
\begin{align*}
\dkl(\tilde Q\Gb \tilde Q^T||\Gb)&= \int \frac{\E_{\tilde{Q}}[f_{\tilde{Q}}(Z)]}{g(Z)}\log\left[\frac{\E_{\tilde{Q}}[f_{\tilde{Q}}(Z)]}{g(Z)}\right]g(Z)\,dZ\\
&\le\int \E_{\tilde{Q}}\left[\frac{f_{\tilde{Q}}(Z)}{g(Z)}\log\left[\frac{f_{\tilde{Q}}(Z)}{g(Z)}\right]\right]g(Z)\,dZ\\
&=\E_{\qf\sim {\tilde{Q}}}[\dkl(\qf\Gb \qf^T||\Gb)].\numberthis\label{eqn:dkljensen}
\end{align*}
Using that $\tilde Q$ is invertible and $\|\delta(\tilde Q)\|\le1/2$ by construction, \eqref{eqn:dklq} and \eqref{eqn:traceqt2} then imply
\begin{align}
\E_{\tilde \qf\sim\tilde Q}[\dkl(\tilde \qf\Gb\tilde \qf^T||\Gb)]
\le O\left(\frac{LN^3}{KM}\right).
\end{align}
Applying this and \eqref{eqn:prob12} in \eqref{eqn:dtvsplit} gives
\begin{align}\label{eqn:tv-final}
d_\tv(Q\Gb Q^T,\Gb)&\le O\left(\sqrt{\frac{(M-K)N^3}{KM}}\right),
\end{align}
which with \eqref{eqn:triangle} implies \eqref{eqn:tvhiding-3}.
\end{proof}

The proof of Theorem~\ref{thm:hiding} for $N=o(\sqrt{K})$ diverges from the above proof by proving a sharper bound on $d_\tv(Q\Gb Q^T,\Gb)$, using $\chi^2$-divergence and a Bakry--\'Emery concentration result \cite{bakry1985diffusions,kolesnikov2016riemannian}. This proof is given in Appendix~\ref{sec:hiding2}. For the rest of the main text of this paper, we will assume the full Theorem~\ref{thm:hiding} is proved.

\section{Proof of Theorem~\ref{thm:ggt}}\label{sec:ggt}

In this section, we prove Theorem~\ref{thm:ggt} on closeness of (properly scaled) $GG^T$ and $\Gb\dsim\gsym$. It suffices to prove this for $N=O(\sqrt{K})$.
The sparse result in \cite[Theorem 1.5]{shou2026proof} shows that for $N\le K$ and $NK=o(M)$, that 
\begin{align}\label{eqn:sparse}
d_\tv(MU_{NK}U_{NK}^T,G_{NK}G_{NK}^T)&\le O\left(\sqrt{\frac{NK}{M}}\right),
\end{align}
where $G_{NK}$ is an $N\times K$ matrix of iid complex standard Gaussians.
We use this result with Theorem~\ref{thm:hiding} (with the full $N=O(\sqrt{K})$ result) to prove Theorem~\ref{thm:ggt}. Choose e.g. $M=K^3$ (any $\omega(K^2)$ will do). Then for $N=O(K^{1/2})$,
\begin{align*}
d_\tv(\Gb,G_{NK}G_{NK}^T/\sqrt{K})&\le d_\tv(\Gb,MK^{-1/2}U_{NK}U_{NK}^T)+d_\tv(MU_{NK}U_{NK}^T,G_{NK}G_{NK}^T)\\
&\le O\left(\frac{N}{\sqrt{K}}\right)+O\left(\sqrt{\frac{NK}{M}}\right)
\le O\left(\frac{N}{\sqrt{K}}\right),
\end{align*}
as desired. \qed

\section{Hardness argument}\label{sec:hardness}

In this section, we prove Theorem~\ref{thm:hardness}. There are two main differences from the boson sampling argument of \cite{aa}: 
\begin{enumerate}
\item We have a choice of squeezing parameter $s$ and a variable photon number $N$.
\item We lack a general instance generating/rejection sampling lemma to hide $X\in\gsym$ exactly as a submatrix of $UI_KU^T$. 
\end{enumerate}
For (1), because the output probabilities \eqref{eqn:hprob} involve the squeezing parameter $s$, the additive error threshold obtained will depend on the choice of $s$. 
We will choose $s$ to minimize this additive error, which will allow us to state the $|\mathrm{GHE}|_\pm^2$ problem in terms of additive error $\varepsilon\cdot\E|\Haf(\Gb)|^2$.
Difference (2) will be resolved by using an approximate instead of exact instance generating method in \fpostbpp. Since $\postbpp\subseteq\bpp^\np$ \cite[\S2]{aa}, this will not change the end complexity class.
Proposition~\ref{prop:inf} shows why one cannot obtain an exact instance generating lemma via rejection sampling as in \cite[Lemma 5.7]{aa} for Gaussian boson sampling with $K=\alpha M$, $0<\alpha<1/2$.
We note that previous GBS hardness arguments avoided the issue in (2) by either working in the sparse squeezer regime \cite{HamiltonGBS2017,KruseGBS2019}, where instance generating follows from instance generating for Fock boson sampling \cite{aa}, or by considering $K=M$ \cite{shou2026proof} in which case the bound to use the rejection sampling of \cite{aa} is easily found to hold, or by considering the variant bipartite Gaussian boson sampling \cite{grier2022complexity,bouland2023complexity}, which uses a different set-up with two-mode squeezed states with parameters tuned to implement a specific matrix.
For the setting here with arbitrary $K$ including $K=\alpha M$, $0<\alpha<1/2$, we will implement an approximate instance generating procedure.

For difference (1), note that for an initial state consisting of $K$ single-mode squeezed states with equal squeezing parameters $s_i=s>0$, and any photon count $N\in2\N$ \cite{KruseGBS2019},
\begin{align}\label{eqn:pn}
\P[N]&=\binom{N/2+K/2-1}{N/2}\frac{\tanh^N s}{\cosh^K s},
\end{align}
where the binomial coefficient is a generalized binomial coefficient allowing for non-integer $K/2$.
The average number of photons for such an input state is $K\sinh^2s$. We will have $N=O(\sqrt{K})$, and will choose squeezing $s$ (to $\poly(N,1/\delta,1/\varepsilon)$ bits) so that, up to exponentially small error, 
\begin{align}\label{eqn:s-choice}
N=K\sinh^2s,\quad\text{which implies}\quad \frac{\cosh^Ks}{\tanh^Ns}=\frac{\left(1+\frac{N}{K}\right)^{K/2}}{\left(\frac{N}{N+K}\right)^{N/2}}=\frac{e^{N/2}K^{N/2}}{N^{N/2}}\exp\left[\frac{N^2}{4K}+o(1)\right],
\end{align}
using that $N^2=O(K)$.
This choice of $s$ can be seen to minimize the ratio $\frac{\cosh^Ks}{\tanh^Ns}$ (e.g. by taking logarithmic derivative), which will appear in the additive error threshold \eqref{eqn:add-error}. 
Also, note that for $N=o(\sqrt{K})$, as $N\to\infty$,
\begin{align}
\P[N]&=\frac{\Gamma(N/2+K/2)}{(N/2)!\Gamma(K/2)}\frac{\left(\frac{N}{N+K}\right)^{N/2}}{\left(1+\frac{N}{K}\right)^{K/2}}=\frac{1+o(1)}{\sqrt{\pi N}}.
\end{align}
Since this is no worse than inverse polynomial, we could postselect on observing exactly $N$ photons, at the cost of repeating the GBS experiment polynomial-many more times.
However, this postselection will actually be unnecessary in the hardness argument, since we will be using Stockmeyer's algorithm \cite{stockmeyer1985approximation} to estimate output probabilities. The main point of choosing $s$ as in \eqref{eqn:s-choice} is to minimize the additive error threshold in \eqref{eqn:add-error}.
We do not use postselection on $N$ or high collision-free probability anywhere in the hardness reduction.

In order to give a hardness reduction, we need to consider finite-precision inputs and outputs to a Turing machine. In particular, the Gaussian and unitary matrices involve real and complex numbers, which must be rounded to finite-precision.
As in \cite[Footnote 19]{aa}, we encode each entry of a unitary matrix in binary to $\poly(M,1/\varepsilon,1/\delta)$ bits. In general, the resulting binary description $v$ is not actually unitary, but we can associate each description $v$ to a unitary $U(v)\in\U(M)$ obtained by Gram--Schmidt orthonormalization of the columns with typically very small error \cite[Lemma 7.2]{aaronson2003algorithms}. We let $\Uf(M)$ denote the set of such finite binary descriptions $v$, and let $U(v)$ denote its associated unitary $U\in\U(M)$. 
Based on the above, we make the following finite-precision implementation assumption, which we expect can be proved via careful accounting with Gram--Schmidt or QR factorization. 
We note that Assumption~\ref{finite}(a) is one possible route, though not the only one, to rigorously implement the finite-precision vs Haar sampling also used in \cite{aa}.

\begin{assumption}\label{finite}
Let $\eta>0$ be a finite precision and $\rho>0$ a failure probability.
There is a bit precision $Q=\poly(M,\log(1/\eta),\log(1/\rho))$, a number of random bits $R=\poly(M,Q)$, and a polynomial-time algorithm which maps $\{0,1\}^R\ni \xi\mapsto v=v_\xi$ for $v$ a finite binary description with associated $M\times M$ unitary $U(v)$ as described above, such that the following hold: 
\begin{enumerate}[(a)]

\item There is a coupling $(v,H)$, with $H\dsim\mathcal H_M$ Haar random, such that for $\xi$ uniform, 
\begin{align*}
\P[\|U(v)-H\|_\mathrm{op}>\eta/4]\le\rho.
\end{align*}

\item From $(v,S)$, where $S$ is any subset $S\subset\{1,\ldots,M\}$, one can compute a finite-precision binary matrix $\hat Z(v,S)$ in deterministic polynomial time such that
\begin{align*}
\|\hat Z(v,S)-(U(v)I_KU(v)^T)_{S,S}\|_\infty\le \eta/2.
\end{align*}
 
\end{enumerate}
As a result, combining (a) and (b), for this coupling and for $S$ an independent uniform random size $N$ subset,
\begin{align}\label{eqn:c-final}
\P[\|\hat Z(v,S)-(HI_KH^T)_{S,S}\|_\infty>\eta]\le \rho.
\end{align}
We let $\mu_{Q,M}$ denote the law of $v_\xi$ for uniform $\xi$.
\end{assumption}
The main property we need from Assumption~\ref{finite} is \eqref{eqn:c-final}, which says we can produce a binary finite-precision matrix $\hat Z(v,S)$ which is generally a good entry-wise approximation to $(HI_KH^T)_{S,S}$. We need the finite-precision matrix so we can apply an \np-oracle in the proof of Lemma~\ref{lem:instance} below.
The failure probability $\rho$ in \eqref{eqn:c-final} accounts for ill-conditioned starting Gaussian matrices which may have poor finite-precision approximation during the coupling procedure. 

Compared to \cite{aa}, which generally referred directly to the continuum unitary and Gaussian distributions after the preliminary finite-precision discussions, we will need to be more precise, and will keep track of the finite-precision rounding.
We will need to use the complexity class \postbpp\ (also called $\bpp_{\sans{path}}$), which is \bpp\ with postselection, defined precisely in e.g. \cite[\S2]{aa}.
The main property we need for this class is $\postbpp\subseteq\bpp^\np$ \cite{han1997threshold,bellare2000uniform}.
We will also use the function/search analogue \fpostbpp.
To resolve the difference (2) from above, we prove (under Assumption~\ref{finite})

\begin{lem}[approximate instance generation]\label{lem:instance}
Consider $X\dsim\gsym$, and let $\delta,\varepsilon$ be error parameters.
Let $K=\Omega(N^2/\delta^2)$, $M=O(\poly(N,1/\delta))$, $N\le K\le M$, and $[M]:=\{1,\ldots,M\}$. Choose a scale $\gamma=2^{-p}$ for some $p=\poly(N,1/\delta,1/\varepsilon)$.
Then there is an $\fbpp^\np$ algorithm $\mathcal A$ (running in $\poly(N,1/\delta,1/\varepsilon)$ time) which, on a sufficiently fine $\poly(N,1/\delta,1/\varepsilon)$-bit precision implementation, takes as input a matrix $X\dsim\gsym$, and outputs either $\bot$ (failure) or a pair $(v,S^*)\in \Uf(M)\times \{\text{size $N$ subsets of $[M]$}\}$. It succeeds with probability $\ge 1-O(\delta)$ over $X$ and $\mathcal A$. Conditioned on succeeding, the output $(v,S^*)$ satisfies 
\begin{enumerate}[(i)]
\item $\|(U(v)I_KU(v)^T)_{S^*,S^*}-M^{-1}K^{1/2}X\|_\infty\le O(\gamma)$, where $\|\cdot\|_\infty$ denotes the $\ell^\infty$ maximum entrywise norm, and $U(v)$ the unitary associated with the finite-precision description $v$.
\item Let $\mathcal L(v,S^*)$ be the law of $(v,S^*)$ averaged over $X$ in successful trials, and let $\mu_{Q,M}\otimes\operatorname{Unif}$ denote the product measure of $\mu_{Q,M}$ on $\Uf(M)$, and the uniform distribution on size $N$ subsets of $[M]$. Then 
\begin{align}\label{eqn:tvd-us}
d_\tv(\mathcal L(v,S^*),\mu_{Q,M}\otimes \operatorname{Unif})\le O(\delta).
\end{align}
\end{enumerate}
\end{lem}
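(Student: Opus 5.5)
The algorithm $\mathcal A$ will be rejection sampling of the joint law $(v,S)\dsim\mu_{Q,M}\otimes\Unif$ conditioned on the event that $MK^{-1/2}\hat Z(v,S)$ lies in a small box around $X$. Approximate conditional sampling of this kind can be done in \fpostbpp\ by postselecting on that event, and $\postbpp\subseteq\bpp^\np$. Concretely, I would discretize the space of $N\times N$ symmetric complex matrices into cubes $B$ of side $\gamma' = \gamma M K^{-1/2}$, suitably rescaled. I would also fix the precision $\eta=\gamma$ and the failure probability $\rho$ in Assumption~\ref{finite} as inverse polynomials chosen so that $\rho\cdot(\text{number of relevant cells})$ and all rounding errors are $O(\delta)$. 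Given $X$, let $B(X)$ be the cube containing $M^{-1}K^{1/2}X$. Sample uniform random bits $\xi$ together with a uniform subset $S$, and postselect on $\hat Z(v_\xi,S)\in B(X)$; this condition can be checked in deterministic polynomial time by Assumption~\ref{finite}(b). The output is $(v_\xi,S^*=S)$. Conclusion (i) is then immediate, since $\|\hat Z-(U(v)I_KU(v)^T)_{S,S}\|_\infty\le\eta/2$ and every point of the cube is within $O(\gamma)$ of $M^{-1}K^{1/2}X$.

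For (ii), the key identity is that if $X$ were distributed as the cell-wise law of $Y:=MK^{-1/2}\hat Z(v,S)$ under $\mu_{Q,M}\otimes\Unif$, then choosing a cell according to that law and conditionally sampling $(v,S)$ given the cell reproduces $\mu_{Q,M}\otimes\Unif$ exactly. The output law is therefore the mixture $\sum_B \P_X[B]\,\mathcal L((v,S)\mid Y\in B)$, and its TVD from $\mu_{Q,M}\otimes\Unif=\sum_B\P_Y[B]\,\mathcal L((v,S)\mid Y\in B)$ is at most $d_\tv$ between the discretized laws of $X$ and $Y$, which in turn is at most $d_\tv(X,Y)$ before discretization. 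I would bound this as follows. By \eqref{eqn:c-final}, $Y$ is within $M K^{-1/2}\eta$ in $\ell^\infty$ of $MK^{-1/2}(HI_KH^T)_{S,S}$ except with probability $\rho$. By permutation invariance of Haar measure, $(HI_KH^T)_{S,S}$ for uniform $S$ has the same law as $U_{NK}U_{NK}^T$, so Theorem~\ref{thm:hiding} gives TVD $O(N/\sqrt K)=O(\delta)$ to $\Gb$ by the assumption $K=\Omega(N^2/\delta^2)$.

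The remaining issue is that a small $\ell^\infty$ perturbation can move $Y$ across a cube boundary. I would handle this with a random shift of the grid, chosen uniformly at random by $\mathcal A$ and independent of everything else. The probability that a perturbation of size $O(MK^{-1/2}\eta)$ crosses a boundary is then $O(N^2\,MK^{-1/2}\eta/\gamma')$ per point, which is $O(\delta)$ once $\eta\ll\gamma'\delta/(N^2 M)$.

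Success probability is the main obstacle. Postselection in \postbpp\ requires the postselected event to have nonzero probability. The real difficulty, however, is that approximate conditional sampling via $\bpp^\np$, using Stockmeyer-type or uniform-generation techniques, needs the event probability not to be smaller than $2^{-\poly}$, and we must declare $\bot$ when $\P_Y[B(X)]$ is tiny. I would argue that the set of $X$ whose cell has $\P_Y[B]\le 2^{-\poly}$ or is much smaller than $\P_X[B]$ has small $X$-probability. Two facts should give this. First, truncating $X$ to $\|X\|_\infty\le \poly(N)\log(1/\delta)$ loses only $O(\delta)$. Second, within the truncated region each cell has Gaussian mass at least $2^{-\poly}$, and the cells where $Y$ has much less mass than $X$ contribute at most $O(d_\tv)=O(\delta)$ of $X$-mass. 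The $\bpp^\np$ almost-uniform sampler from \cite{bellare2000uniform} then produces the conditional law up to $O(\delta)$ TVD in $\poly(N,1/\delta,1/\varepsilon)$ time. Combining these error terms gives success probability $1-O(\delta)$ and \eqref{eqn:tvd-us}.
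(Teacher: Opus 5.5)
Your proposal is correct and has the same skeleton as the paper's proof. You sample $(v,S)\dsim\mu_{Q,M}\otimes\Unif$ conditioned on the $\gamma$-cell of $\hat Z(v,S)$ matching that of $M^{-1}K^{1/2}X$, implemented by uniform witness generation with an \np\ oracle. You read off (i) from the cell diameter plus $\eta/2$. You get (ii) from decomposing $\mu_{Q,M}\otimes\Unif$ into cell-conditional laws, which reduces everything to the distance between cell laws; that distance is controlled by Theorem~\ref{thm:hiding}, the coupling \eqref{eqn:c-final}, and a boundary-crossing estimate.

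There are two real differences from the paper.

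\begin{itemize}
\item \emph{Boundary crossing.} The paper keeps a fixed grid and proves Lemma~\ref{lem:boundary}, a Gaussian density bound. It then transfers that bound to $Z$ through $d_\tv(Z,X')$. Your uniformly random grid offset instead gives a distribution-free $O(N^2\eta/\gamma)$ crossing bound at every point, so no density estimate is needed. The cost is that the offset becomes extra randomness, and the mixture argument must be run for each fixed offset and then averaged; convexity of $d_\tv$ makes this harmless.
\item \emph{Success probability.} The obstacle you flag is not real. Since $(v,S)$ comes from $\poly$ many random bits, a nonempty witness set automatically has probability at least $2^{-\poly}$. Bellare--Goldreich--Petrank produce an exactly uniform witness whenever the language is nonempty. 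So the paper only needs nonemptiness of $\mathscr L_{\hat X'}$, which fails with probability at most the cell-law distance, because unattainable cells carry at most that much input mass. Your comparison argument (cells with $\P_Y[B]\ll\P_X[B]$ have small $X$-mass, and truncated cells have mass $2^{-\poly}$) is valid but superfluous.
\end{itemize}

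There are also two slips to fix.

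\begin{itemize}
\item Setting ``$\eta=\gamma$'' contradicts the smallness your random-shift estimate needs, namely $\eta\ll\gamma\delta/N^2$, which you do impose later. With $\eta=\gamma$, a perturbation crosses a cell wall with probability of order one.
\item The claim that the discretized distance is at most $d_\tv(X,Y)$ ``before discretization'' is vacuous. $Y=MK^{-1/2}\hat Z(v,S)$ is finitely supported, so $d_\tv(X,Y)=1$. The chain has to pass through the continuum matrix $MK^{-1/2}(HI_KH^T)_{S,S}$ at the level of cells, which is what your next paragraph effectively does.
\end{itemize}

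Finally, generating a uniform $S$ from random bits needs the rejection step $t<\binom{M}{N}$, as in the paper.
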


\begin{proof}[Proof of Lemma~\ref{lem:instance}]
Consider the true distributions $X'=M^{-1}K^{1/2}X$ and $H\dsim\mathcal H_M$, and let $Z:=(HI_KH^T)_{S,S}$ for a uniformly random size $N$ subset $S\subset[M]$. Let $R_\gamma$ denote rounding the real and imaginary part of every entry of a matrix to the interval $[j\gamma,(j+1)\gamma)$, $j\in\Z$, containing it.
The overall idea, ignoring finite-precision implementation for now, is: Given $X'$, we want to generate random $(H,S)$ and postselect on the event $R_\gamma(Z)=R_\gamma(X')$, i.e. up to $\gamma$ error, $X'$ appears as the submatrix $Z=(HI_KH^T)_{S,S}$. 
Intuitively, this is a postselection problem so can be done in $\fpostbpp\subseteq\fbpp^{\np}$. However, we need to consider the finite-precision implementation details to properly apply the $\np$ oracle in $\fbpp^{\np}$ \cite{bellare2000uniform}. We do this as follows.

\begin{enumerate}[1.,leftmargin=*]
\item Truncation. First note we can restrict to $X$ with maximum entry size $\|X\|_\infty\le C_1\sqrt{\log(N/\delta)}$ for large enough $C_1$, as this occurs with high probability $1-O(\delta)$ by Gaussian suprema and concentration bounds.\footnote{If $X_t$ is $\sigma^2$-subgaussian for every $t\in T$, then $\P[\sup_{t\in T}X_t\ge\sqrt{2\sigma^2\log|T|}+x]\le e^{-x^2/2\sigma^2}$; see e.g. \cite[\S5]{vanhandel2016apc550}.
In this case, $|T|=N(N+1)/2$.
}
Then let $X'=M^{-1}K^{1/2}X$ conditioned on $\|X\|_\infty\le C_1\sqrt{\log(N/\delta)}$. 
The cutoff prevents arbitrarily large entries, and allows for the required finite-precision implementation. Note that $d_\tv(Z,X')\le d_\tv(Z,M^{-1}K^{1/2}X)+d_\tv(M^{-1}K^{1/2}X,X')\le O(\delta)$, using Theorem~\ref{thm:hiding}, invariance of Haar measure under row permutations, and the TVD coupling inequality. (If $X'\rvsim Y|\event$, then by a coupling argument, $d_\tv(X',Y)\le \P[\event^c]$.) 

\item Finite-precision cells and bounds. Choose finite precision $\eta=2^{-q}$ with say $\eta\le \min(\delta\gamma^2,\delta2^{-N})$, which requires only $q\sim\poly(N,1/\delta,1/\varepsilon)$ bits of precision. Taking $\rho=O(\delta)$ in \eqref{eqn:c-final} of Assumption~\ref{finite}, with probability $1-O(\delta)$ we can produce a finite-precision matrix $\hat Z(v,S)$ such that $\|\hat Z(v,S)-Z\|_\infty\le\eta$. There is also a finite-precision matrix $\hat X'$ such that $\|\hat X'-X'\|_\infty\le\eta$.
We want to show that just like $d_\tv(R_\gamma(Z),R_\gamma(X'))\le d_\tv(Z,X')=O(\delta)$ for the continuum distributions, that
\begin{align}\label{eqn:Rg-goal}
d_\tv(R_\gamma(\hat Z(v,S)),R_\gamma(\hat X'))=O(\delta).
\end{align}
This bound will ensure the language to sample from later in \eqref{eqn:language} is non-empty with high probability, and that the resulting distribution of $(v,S^*)$ satisfies part (ii) of the lemma.

We start by applying the triangle inequality, giving
\begin{multline}\label{eqn:triangle-gamma}
d_\tv(R_\gamma(\hat Z(v,S)),R_\gamma(\hat X'))\\
\le d_\tv(R_\gamma(\hat Z(v,S)),R_\gamma(Z))+d_\tv(R_\gamma(Z),R_\gamma(X'))+d_\tv(R_\gamma(X'),R_\gamma(\hat X')).
\end{multline}
To bound the first and third distances on the right side, we need to bound the probability that $\hat Z(v,S)$ or $\hat X'$ may have moved $\gamma$-cells during the finite-precision implementation/rounding process. Intuitively, because we can take much finer precision $\eta$ than $\gamma$, this probability is very small.
More formally, let $\partial_\eta=\partial_{\eta,\gamma}$ denote the set of matrices with some real or imaginary coordinate within $\eta$ of the grid $\gamma\Z$. Note that for a matrix outside of $\partial_\eta$, moving by $\eta$ cannot move across the coarser $\gamma$-cell boundaries.
The boundary estimate Lemma~\ref{lem:boundary} below then gives $\P[X'\in\partial_\eta]=O(\delta)$, and 
\begin{align*}
\P[Z\in\partial_\eta]&\le \P[X'\in\partial_\eta]+d_\tv(Z,X')=O(\delta).
\end{align*}
If the coupling in Assumption~\ref{finite} is succesful, then $\|\hat Z(v,S)-Z\|_\infty\le\eta$. Thus the rounded cells $R_\gamma(\hat Z(v,S))$ and $R_\gamma(Z)$ can differ only if the coupling in Assumption~\ref{finite} failed, or if $Z\in\partial_\eta$.
Thus by the TVD coupling bound and Assumption~\ref{finite},
\begin{align*}
d_\tv(R_\gamma(\hat Z(v,S)),R_\gamma(Z))&\le \rho+\P[Z\in\partial_\eta]=O(\delta).\numberthis
\end{align*}
Also by TVD coupling lemma,
\begin{align}
d_\tv(R_\gamma(X'),R_\gamma(\hat X'))&\le \P[X'\in\partial_\eta]=O(\delta).
\end{align}
The remaining term in the middle of the right side of \eqref{eqn:triangle-gamma} is bounded using the hiding property Theorem~\ref{thm:hiding} and TVD conditioning property, as $d_\tv(R_\gamma(Z),R_\gamma(X'))\le d_\tv(Z,X')=O(\delta)$. 
This gives \eqref{eqn:Rg-goal}.

\item Uniform $\np$ witness generation with an $\np$ oracle. 
The precise finite-precision problem is now: Given the cell $R_\gamma(\hat X')$, post-select on the event $R_\gamma(\hat Z(v,S))=R_\gamma(\hat X')$. 
Recall we view $\hat Z(v,S)=\hat Z(v_r,S_t)$ as generated via a random-string model $r,t\in\{0,1\}^{\poly(N,1/\delta,1/\varepsilon)}\mapsto (v_r,S_t)$.
Note, in order to sample uniformly from the $\binom{M}{N}$ choices of $S$, which need not divide $2^L$, we actually consider uniform random $t\in\{0,1\}^\ell$ for $\ell=\left\lceil\log_2\binom{M}{N}\right\rceil$, and accept if $t<\binom{M}{N}$, then map each of those $t$ to a subset $S=S_t$.
For the above finite-precision problem, we want to sample a uniform $(r,t)$ from the language
\begin{align}\label{eqn:language}
\mathscr L_{\hat X'}=\{r\in\{0,1\}^{\poly(N,1/\delta,1/\varepsilon)},t\in\{0,1\}^\ell:R_\gamma(\hat Z(v_r,S_t))=R_\gamma(\hat X')\text{ and }t<\binom{M}{N}\},
\end{align}
and then output $(v_r,S_t)$. 
Due to the TVD bound \eqref{eqn:Rg-goal}, and considering the set of $\gamma$-cells $\{c:\text{not attainable by }R_\gamma(\hat Z(v_r,S_t)))\}$, we see $\mathscr L_{\hat X'}$ is nonempty with probability $1-O(\delta)$ over $\hat X'$.
When $\mathscr L_{\hat X'}$ is nonempty, sampling a uniform pair $(r,t)$ can be done in probabilistic polynomial time with an \np\ oracle via \cite{bellare2000uniform}: the language $\mathscr L_{\hat X'}$ is in \np\ (and \p), so if $\mathscr L_{\hat X'}$ is nonempty, then \cite{bellare2000uniform} generates a uniform random witness $(r,t)\in\mathscr L_{\hat X'}$ with success probability at least $0.2$.
Additionally, we can obtain success probability $1-O(\delta)$ using standard amplification with $O(\log\delta^{-1})$ trials in the \fpostbpp\ to $\fbpp^\np$ implementation. 
In total, conditioned on successful output (including $\mathcal L_{\hat X'}$ nonempty, which can be checked by seeing failure or verifying if the output is in $\mathcal L_{\hat X'}$), we obtain a uniform random sample $(v,S^*)$ from the distribution $\mu_{Q,M}\otimes\operatorname{Unif}$ conditioned on the event $R_\gamma(\hat Z(v,S^*))=R_\gamma(\hat X')$, which is generated in $\fbpp^\np$ in time $\poly(N,1/\delta,1/\varepsilon)$ and with success probability $1-O(\delta)$. 

Since $\|\hat Z(v,S^*)-(U(v)I_KU(v)^T)_{S^*,S^*}\|_\infty\le \eta/2$ and $\|X'-\hat X'\|_\infty\le\eta$, we obtain
\begin{align}\label{eqn:ux}
\|(U(v)I_KU(v)^T)_{S^*,S^*}-X'\|_\infty&\le O(\gamma) + O(\eta)=O(\gamma).
\end{align}
Since we can include $\|X\|_\infty>C_1\sqrt{\log(N/\delta)}$ in the failure probability, \eqref{eqn:ux} gives part (i) of the lemma.

\item It remains to prove (ii) of the lemma.
Given a cell $c=R_\gamma(\hat X')$ attainable by $R_\gamma(\hat Z(v_r,S_t))$, successful outputs $(v,S^*)$ are distributed as $\mu_{Q,M}\otimes\Unif$ conditioned on $R_\gamma(\hat Z(w,S))=c$; call this conditional measure $K_c$, for $c$ attainable by $R_\gamma(\hat Z(v,S))$.
The total distribution of $(v,S^*)$ is then $\sum_c \P[R_\gamma(\hat X')=c|\text{success}]K_c$.
We can also decompose $\mu_{Q,M}\otimes\Unif$ as $\sum_c \P[R_\gamma(\hat Z(w,S))=c]K_c$ for $(w,S)\dsim\mu_{Q,M}\otimes\Unif$.

Using the characterizations \eqref{eqn:tvd-f} and \eqref{eqn:tv2} of TVD, we thus obtain
\begin{align*}
d_\tv(\mathcal L(v,S^*),\mu_{Q,M}\otimes\Unif)&= \sup_{f:\|f\|_\infty\le1}\frac12\sum_c(\P[R_\gamma(\hat X')=c|\text{success}]-\P[R_\gamma(\hat Z(w,S))=c])K_c(f)\\
&\le d_\tv(R_\gamma(\hat X')|\text{success},R_\gamma(\hat Z(w,S))) \\
&\le d_\tv(R_\gamma(\hat X'),R_\gamma(\hat Z(w,S)))+O(\delta)=O(\delta),
\end{align*}
using \eqref{eqn:Rg-goal} and that $\P[\text{failure}]=O(\delta)$. This completes the proof of Lemma~\ref{lem:instance}(ii).
\end{enumerate}
\end{proof}

The following lemma was used in the proof of Lemma~\ref{lem:instance}.
\begin{lem}[boundary estimate]\label{lem:boundary}
Let $\eta\le\gamma/2$, and let $\partial_\eta=\partial_{\eta,\gamma}$ denote the set of matrices with some real or imaginary coordinate within $\eta$ of the grid $\gamma\Z$.
Then for $X'=M^{-1}K^{1/2}X$ with $X\dsim\gsym$,
\begin{align}\label{eqn:boundary}
\P[X'\in\partial_\eta]&\le CN^2\eta(\gamma^{-1}+MK^{-1/2}).
\end{align}
If we consider $X'$ conditioned on a probability $1-\epsilon$ event, then the above bound holds with an additional $+\epsilon$ term.
\end{lem}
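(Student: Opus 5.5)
The plan is a union bound over the $2N(N+1)/2\le 2N^2$ real coordinates, namely the real and imaginary parts of the upper triangular entries of $X'$, combined with a one-dimensional estimate for each coordinate. Each such coordinate $Y$ of $X'=M^{-1}K^{1/2}X$ is a centered real Gaussian. Its variance is $\sigma^2=\tfrac12 M^{-2}K$ off the diagonal and $M^{-2}K$ on the diagonal, so $\sigma\asymp K^{1/2}/M$. The event that $Y$ lies within $\eta$ of $\gamma\Z$ is the union over $j\in\Z$ of the intervals $[j\gamma-\eta,j\gamma+\eta]$. Summing over coordinates, it therefore suffices to show
\begin{align*}
\sum_{j\in\Z}\P\bigl[|Y-j\gamma|\le\eta\bigr]\le C\eta(\gamma^{-1}+\sigma^{-1}).
\end{align*}

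To prove this one-dimensional bound, I will write the left side as $\int \varphi_\sigma(y)\,\oneb_{A}(y)\,dy$, where $\varphi_\sigma$ is the Gaussian density and $A=\gamma\Z+[-\eta,\eta]$. I will then compare it with a Riemann sum. For each $j$, the integral over the $j$th interval is at most $2\eta\max_{|y-j\gamma|\le\eta}\varphi_\sigma(y)$. Because $\varphi_\sigma$ is unimodal, the maxima over the intervals lying to the right of $0$ (with $j\gamma\ge\eta$) are bounded by the average of $\varphi_\sigma$ over the preceding gap of length $\gamma-2\eta\ge0$. Together with $\eta\le\gamma/2$, this gives $\max_{I_j}\varphi_\sigma\le \gamma^{-1}\int_{(j-1)\gamma}^{j\gamma}\varphi_\sigma+\dots$. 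The same holds by symmetry on the left. The $O(1)$ intervals near the mode contribute at most $2\eta\varphi_\sigma(0)=O(\eta/\sigma)$. Summing over $j$ gives
\begin{align*}
\sum_j 2\eta\max_{I_j}\varphi_\sigma\le 2\eta\Bigl(\gamma^{-1}\int\varphi_\sigma+3\varphi_\sigma(0)\Bigr)\le C\eta(\gamma^{-1}+\sigma^{-1}),
\end{align*}
and substituting $\sigma^{-1}\asymp MK^{-1/2}$ together with the $2N^2$ coordinates yields \eqref{eqn:boundary}. The main obstacle is carrying out this monotone comparison cleanly, so that each interval is charged against a disjoint adjacent segment. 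I will do this by assigning to interval $j>0$ the segment $[(j-1)\gamma+\eta,\,j\gamma-\eta]$ when $\gamma\ge 4\eta$. Otherwise I will use the trivial bound $\P\le1\le 4\eta/\gamma$, which is already dominated by the right side.

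For the conditioned version, suppose $\event$ is an event with $\P[\event]\ge 1-\epsilon$. I will use $\P[X'\in\partial_\eta\mid\event]\le \P[X'\in\partial_\eta]+d_\tv(\mathcal L(X'\mid\event),\mathcal L(X'))$. The TVD term is at most $\P[\event^c]\le\epsilon$ by the coupling remark made in the proof of Lemma~\ref{lem:instance}. Alternatively, one can bound $\P[\cdot\mid\event]\le \P[\cdot]/(1-\epsilon)$ and absorb the factor into the constant. This gives the stated bound with the additional $+\epsilon$ term.
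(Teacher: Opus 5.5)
Your proposal is correct and follows essentially the same route as the paper. Both use a union bound over the $O(N^2)$ real coordinates and a one-dimensional estimate that charges each interval $[j\gamma-\eta,j\gamma+\eta]$, $j\ge1$, against a disjoint segment to its left via monotonicity of the Gaussian density, plus a $2\eta\|\varphi_\sigma\|_\infty$ term for the central interval. The conditioned version is handled the same way, by the coupling bound $d_\tv(\mathcal L(X'\mid\event),\mathcal L(X'))\le\P[\event^c]$.

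The only cosmetic difference is the choice of segment. The paper charges interval $j$ to $[(j-1)\gamma,\,j\gamma-\eta]$, whose length $\gamma-\eta\ge\gamma/2$ follows directly from $\eta\le\gamma/2$, so it needs no separate case $\gamma<4\eta$.
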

\begin{proof}
We do a union bound over the $O(N^2)$ independent entries of $X$. For a single real Gaussian $Y\dsim\RN(0,\sigma^2)$ with density function $f_Y$, which is decreasing for $y\ge0$,
\begin{align*}
\P[\operatorname{dist}(Y,\gamma\Z)\le\eta]&=\sum_{j\in\Z}\int_{j\gamma-\eta}^{j\gamma+\eta}f_Y(y)\,dy\\
&\le 2\eta\|f_Y\|_\infty+2\sum_{j=1}^\infty2\eta f_Y(j\gamma-\eta)\\
&\le \frac{2\eta}{\sigma\sqrt{2\pi}}+\frac{4\eta}{\gamma-\eta}\sum_{j=1}^\infty\int_{j\gamma-\gamma}^{j\gamma-\eta} f_Y(y)\,dy\le C\eta(\sigma^{-1}+\gamma^{-1}).\numberthis
\end{align*}
Since entries of $X'$ have standard deviation $\sigma=\Theta(M^{-1}K^{1/2})$, a union bound gives \eqref{eqn:boundary}. The conditioning statement holds since for a random variable $Z$ and event $\event$, letting $Y:\rvsim Z|\event$, then $d_\tv(Z,Y)\le \P[\event^c]$, by using the coupling bound. (Set $Y=Z$ on $\event$, and $Y\dsim\mathcal L(Z|\event)$ on $\event^c$.)
\end{proof}

Because we round to $\poly(N,1/\delta,1/\varepsilon)$ bits, we want to check the resulting hafnian is not changed too much.
Suppose $\|X\|_\infty,\|Y\|_\infty\le B$. Then for $N\in2\N$,
\begin{align*}
\left||\Haf X|^2-|\Haf Y|^2\right|&=\Bigg|\sum_{\pi,\pi'\in\mathcal P_2(N)}\left(\prod_{\{i,j\}\in\pi}X_{ij}\prod_{\{i',j'\}\in\pi'}\bar X_{i'j'}-\prod_{\{i,j\}\in\pi}Y_{ij}\prod_{\{i',j'\}\in\pi'}\bar Y_{i'j'}\right)\Bigg|\\
&\le \sum_{\pi,\pi'\in\mathcal P_2(N)}NB^{N-1}\|X-Y\|_\infty\\
&\le (N-1)!!^2NB^{N-1}\|X-Y\|_\infty.\numberthis\label{eqn:hafl}
\end{align*}
The rounding process changes a matrix by at most $\|X-Y\|_\infty\le\gamma=2^{-\poly(N,1/\delta,1/\varepsilon)}$. Since $B=O(M^{-1}K^{1/2}\sqrt{\log(N/\delta)})$ for the sampled $U$ in Lemma~\ref{lem:instance}, $\gamma$ is easily chosen so that \eqref{eqn:hafl} is $\le 2^{-\poly(N,1/\delta,1/\varepsilon)}$ for such matrices.

\begin{proof}[Proof of Theorem~\ref{thm:hardness}]

Suppose we had an oracle $\mathcal O$ for limited instances of approximate Gaussian boson sampling as described in the hypotheses. $\mathcal O$ takes as input a string $r\in\{0,1\}^{\poly(M,1/\delta,1/\varepsilon)}$, a  finite binary description $v$ representing an $M\times M$ unitary matrix $U=U(v)$, a squeezing parameter $s>0$ for the first $K=K(M)$ input modes, and an error bound $\varepsilon>0$. We may write the Gaussian boson sampler description as $A=A(v,K,s)$. Over uniform random $r$ (representing randomness in the Gaussian boson sampling output), $\mathcal O$ outputs a distribution $\mathcal D_{\mathcal O}(A,\varepsilon)$ which is $\varepsilon$-close to $\mathcal D_A$, the exact Gaussian boson sampling output distribution for linear optical unitary $U(v)$.

Let $X\dsim\gsym$ be an input matrix and let $\varepsilon,\delta>0$ be the given error parameters. 
Note that it suffices to solve $|\mathrm{GHE}|_\pm^2$ with probability at least $1-O(\delta)$, since we can simply rescale $\delta$ to get the probability to $1-\delta$.
So we want to approximate $|\Haf(X)|^2$ to within additive error $\varepsilon\cdot (N-1)!!$ with success probability at least $1-O(\delta)$ over $X$. 

Choose $(K(M),M)$ with $M=\poly(K(M))$, and squeezing parameter $s$, so that
\begin{align}\label{eqn:nk}
N\le c_1\delta K^{1/2},\qquad N=K\sinh^2s,
\end{align}
for a constant $c_1$ chosen so that the TVD error bound in \eqref{eqn:tvd-us} is e.g. $\le\delta/8$. 
Rescaling $X':=M^{-1}K^{1/2}X$, we then want to approximate $|\Haf(X')|^2$ to within additive error $\varepsilon\cdot(N-1)!!(M^{-1}K^{1/2})^{N}$.

By Lemma~\ref{lem:instance}, for a sufficiently fine numerical precision, with probability $1-O(\delta)$, we can efficiently generate a random finite binary description $v=v_\xi$ of an $M\times M$ unitary $U=U(v)$, and a size $N$ subset $S^*$ of $[M]$, such that 
\begin{align}
\|(UI_KU^T)_{S^*,S^*}-X'\|_\infty\le2^{-\poly(N,1/\delta,1/\varepsilon)},\quad\text{and}\quad d_\tv(\mathcal L(v,S^*), \mu_{Q,M}\otimes\operatorname{Unif})\le O(\delta).
\end{align}
Recall $\mu_{Q,M}$ is the measure on the finite binary descriptions $\Uf(M)$ defined in Assumption~\ref{finite}. 

Conditioned on success of the above instance generating procedure, take $v$ and send $A=A(v,K,s)$ to the GBS oracle $\mathcal O$. 
We will also allow an adversary to know $N$, since it will not affect the argument, and since they could possibly already guess a narrow range of $N$ we are interested in based on the squeezing parameters provided.
Let $\beta$ be an error bound which is polynomial in $\varepsilon$ and $\delta$. Then more formally, denoting $0^{1/\beta}=0\cdots0$ with $\lceil1/\beta\rceil$ zeros, we send the input $\langle A(v,K,s),0^{1/\beta},r\rangle$, for $r\in\{0,1\}^{\poly(M,1/\delta,1/\varepsilon)}$ a random string, to the oracle $\mathcal O$.
As $r$ is varied, $\mathcal O$ returns a sample from $\mathcal D_A'$ with $\|\mathcal D_A-\mathcal D_A'\|_\tv\le\beta$. 

For collision-free $S$ with $N$ photon counts, let 
\begin{align*}
p_S(v):=\P_{\mathcal D_A}[S], \quad q_S(v):=\P_{\mathcal D_A'}[S]=\P_{r}[\mathcal O(A(v,K,s),0^{1/\beta},r)=S].
\end{align*}
The probability $q_{S^*}$ will be approximated in $\fbpp^{\np^{\mathcal O}}$ by Stockmeyer's algorithm \cite{stockmeyer1985approximation} as usual.
The probability $p_{S^*}$ is
\begin{align}
p_{S^*}&=\frac{\tanh^N(s)}{\cosh^K(s)}|\Haf[(UI_KU^T)_{S^*,S^*}]|^2.
\end{align}
Given $p_{S^*}$ or a good approximation to $p_{S^*}$, this with \eqref{eqn:hafl} and Lemma~\ref{lem:instance}(i) will then let us estimate $|\Haf(X')|^2$.

We want to show that $p_{S^*}$ and $q_{S^*}$ are close with high probability over $X$ and $v$.
We can average over the $\binom{M}{N}$ possible $N$-photon collision-free outputs $S$ as in \cite[\S5.2]{aa} to obtain
\begin{align}
\E_{|S|=N}[|p_S-q_S|]&\le\frac{\sum_{|S|=N}|p_S-q_S|}{\binom{M}{N}}
\le \frac{2\|\mathcal D_A-\mathcal D'_A\|_\tv}{\binom{M}{N}}<C_1\beta\frac{N!}{M^N},
\end{align}
using that $M\ge K\ge c_1^{-2}\delta^{-2}N^2$. Then taking $\beta=\varepsilon\delta$, Markov's inequality gives
\begin{align}
\P_{|S|=N}\left[|p_S-q_S|>\frac{\varepsilon}{2}\frac{N!}{M^N}\right]&\le\frac{O(\beta)}{\varepsilon} \le O(\delta).
\end{align}
We will use Lemma~\ref{lem:instance}(ii) to obtain a bound on $|p_{S^*}-q_{S^*}|$.
First, if $(w,S)\dsim \mu_{Q,M}\otimes\operatorname{Unif}$, then since $S$ is uniform, the above implies
\begin{align}
\P_{(w,S)}\left[|p_S(w)-q_S(w)|>\frac{\varepsilon}{2}\frac{N!}{M^N}\right]&\le O(\delta).
\end{align}
Applying Lemma~\ref{lem:instance}(ii), we have 
\begin{align*}
\P_{(v,S^*)}\left[|p_{S^*}(v)-q_{S^*}(v)|>\frac{\varepsilon}{2}\frac{N!}{M^N}\right]&\le \P_{(w,S)}\left[|p_S(w)-q_S(w)|>\frac{\varepsilon}{2}\frac{N!}{M^N}\right]+O(\delta)\\
&\le O(\delta).\numberthis \label{eqn:deltaS}
\end{align*}

Next, as in \cite[\S5.2]{aa}, we use Stockmeyer's algorithm to approximate $q_{S^*}$. 
For any $\alpha>0$, Stockmeyer's algorithm \cite{stockmeyer1985approximation} applied with $f(r):=\mathbf{1}[\mathcal O(A,0^{1/\beta},r)=S^*]$, and standard output probability amplification, gives an estimate $\tilde q_{S^*}$ in $\poly(M,1/\alpha)$ time such that
\begin{align}\label{eqn:qs-stockmeyer}
\P[|\tilde q_{S^*}-q_{S^*}|>\alpha q_{S^*}]<\frac{1}{2^M}.
\end{align}
Also, by Markov's inequality with $\E_{|S|=N}[q_S]\le\binom{M}{N}^{-1}\le\frac{2N!}{M^N}$, and using the TVD bound in Lemma~\ref{lem:instance}(ii) like in \eqref{eqn:deltaS}, we have for any $j>1$,
\begin{align}\label{eqn:qs2}
\P_{(v,S^*)}\left[q_{S^*}>2j\cdot\frac{N!}{M^N}\right]&\le\frac{1}{j}+O(\delta).
\end{align}
Thus taking $j=4/\delta$ and $\alpha=\varepsilon\delta/16$, and also using \eqref{eqn:deltaS}, we get
\begin{align*}
\P\left[|\tilde q_{S^*}-p_{S^*}|>\varepsilon\frac{N!}{M^N}\right]&\le \P\left[|\tilde q_{S^*}-q_{S^*}|>\frac{\varepsilon}{2}\frac{N!}{M^N}\right]+\P\left[| q_{S^*}-p_{S^*}|>\frac{\varepsilon}{2}\frac{N!}{M^N}\right]\\
&\le \P\left[|\tilde q_{S^*}-q_{S^*}|>\alpha q_{S^*}\right]+\P\left[q_{S^*}>2j\frac{N!}{M^N}\right]+O(\delta)\\
&\le \frac{1}{2^M}+\frac{\delta}{4}+O(\delta).\numberthis
\end{align*}
Since $M\ge K\ge\Omega(N^2/\delta^2)$, we get $2^{-M}=O(\delta)$.
We combine this with the $O(\delta)$ failure probability from Lemma~\ref{lem:instance}. 
In total, using the approximation $\tilde q_{S^*}$, with probability at least $1-O(\delta)$ we can approximate $|\Haf[(UI_KU^T)_{S^*,S^*}]|^2$, and thus also $|\Haf(X')|^2$, to additive error 
\begin{align}\label{eqn:add-error}
\varepsilon\cdot \frac{\cosh^K(s)}{\tanh^N(s)}\frac{N!}{M^N}+O(2^{-\poly(N,1/\delta,1/\varepsilon)})&\le \varepsilon\frac{K^{N/2}}{M^N}\frac{N^{N/2}}{e^{N/2}}\sqrt{2\pi N}(1+O(\delta)+O(1/N)),
\end{align}
recalling we chose $s$ in \eqref{eqn:s-choice} up to exponentially small error, and have a small rounding error from \eqref{eqn:hafl}. 
Comparing to
\begin{align*}
\varepsilon\cdot\E|\Haf(X')|^2=\varepsilon \frac{(N-1)!!K^{N/2}}{M^{N}}&= \varepsilon\frac{K^{N/2}}{M^N}\frac{N^{N/2}}{e^{N/2}}\sqrt{2}(1+o(1)),
\end{align*}
we see \eqref{eqn:add-error} differs by a factor of $\sqrt{\pi N}$.
But $\varepsilon$ can absorb $\poly(N)$ factors (just run the procedure with $\varepsilon'=\varepsilon/(c\sqrt{N})$ which adds only a $\poly(\sqrt{N})$ factor), so we obtain the theorem.
\end{proof}

\section{Proof of Proposition~\ref{prop:inf}}\label{sec:inf}

In this section, we prove Proposition~\ref{prop:inf}, showing that the density ratio for $MK^{-1/2}U_{NK}U_{NK}^T$ and $\Gb$ can be exponentially large when $K=\alpha M$, $0<\alpha<1/2$.
We first consider $N=1$, and prove
\begin{lem}\label{lem:f1density}
Let $K\ge2$ and $M>K$. The probability density function for $MK^{-1/2}U_{1K}U_{1K}^T$ over $\C$ is
\begin{align}\label{eqn:f1density}
f_1(z)=\frac{K(K-1)}{2\pi M^2 B(K,M-K)} \int_{M^{-1}K^{1/2}|z|}^1(1-x)^{M-K-1}(x^2-M^{-2}K|z|^2)^{(K-3)/2}\,dx,
\end{align}
where $B(x,y)=\frac{\Gamma(x)\Gamma(y)}{\Gamma(x+y)}$ denotes the beta function.
\end{lem}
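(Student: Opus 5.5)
The plan is to compute the density of the scalar $w:=U_{1K}U_{1K}^T=\sum_{j=1}^K u_j^2$, where $u=(u_1,\dots,u_M)$ is the first row of $U$, and then rescale by $z=MK^{-1/2}w$. The first row of a Haar unitary is uniform on the complex unit sphere. So its first $K$ coordinates can be written as $u_{[K]}=\sqrt{x}\,\theta$, with $x=\|u_{[K]}\|^2\sim\Beta(K,M-K)$ and $\theta$ uniform on the unit sphere of $\C^K$, independent of $x$. This follows from the representation $u=g/\|g\|$ with $g$ an iid complex Gaussian vector, splitting $\|g\|^2$ into independent Gamma pieces. It is here that $M>K$ is needed, so that the Beta law is nondegenerate. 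Then $w=x\,\theta^T\theta$, and it remains to find the law of $\zeta:=\theta^T\theta=\sum_j\theta_j^2\in\{|\zeta|\le1\}$.

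For $\zeta$, I will write $\theta=a+ib$ with $a,b\in\R^K$, so that $\zeta=\|a\|^2-\|b\|^2+2i\,a\cdot b$ and $|\zeta|^2=(\|a\|^2-\|b\|^2)^2+4(a\cdot b)^2$. Since $\theta\mapsto e^{i\phi}\theta$ preserves the law and sends $\zeta\mapsto e^{2i\phi}\zeta$, the law of $\zeta$ is rotation invariant. It therefore suffices to find the law of $r=|\zeta|$.

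The cleanest route is through singular values. Writing $\theta$ as a real $K\times 2$ matrix $[a\;b]$, the quantity $|\zeta|$ is the difference $\sigma_1^2-\sigma_2^2$ of the squared singular values of that matrix, while $\sigma_1^2+\sigma_2^2=1$. Alternatively, I can compute directly with Gaussians: take $a,b$ iid $\RN(0,\tfrac12 I_K)$, so that the $2\times2$ Gram matrix is real Wishart with $K$ degrees of freedom. Its eigenvalue density is $\propto(\lambda_1\lambda_2)^{(K-3)/2}|\lambda_1-\lambda_2|e^{-(\lambda_1+\lambda_2)}$. Normalizing to trace one with $\lambda_{1,2}=(1\pm r)/2$ gives $(1-r^2)^{(K-3)/2}$ times the Jacobian factor $r$ for $r=\lambda_1-\lambda_2$. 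So the planar density of $\zeta$ is $p(\zeta)=\frac{K-1}{2\pi}(1-|\zeta|^2)^{(K-3)/2}$ on the unit disk. The constant comes from $\int_0^1 2\pi r\,\frac{K-1}{2\pi}(1-r^2)^{(K-3)/2}dr=1$. Here $K\ge2$ ensures integrability.

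Next, $w=x\zeta$ with $x$ independent. Conditioning on $x$, the density of $w$ is $x^{-2}p(w/x)\mathbf 1_{|w|\le x}$ in $\C$, which equals $\frac{K-1}{2\pi}x^{-2}\left(\frac{x^2-|w|^2}{x^2}\right)^{(K-3)/2}$. Integrating against the Beta density $x^{K-1}(1-x)^{M-K-1}/B(K,M-K)$, the powers of $x$ combine as $x^{K-1-2-(K-3)}=x^0$. This gives $\frac{K-1}{2\pi B(K,M-K)}\int_{|w|}^1(1-x)^{M-K-1}(x^2-|w|^2)^{(K-3)/2}dx$. Finally I substitute $w=M^{-1}K^{1/2}z$. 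The planar Jacobian of this substitution is $M^{-2}K$, which produces the prefactor $K(K-1)/(2\pi M^2B(K,M-K))$ and the stated limits and integrand in \eqref{eqn:f1density}.

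I expect the main obstacle to be justifying the density of $\zeta=\theta^T\theta$ with the correct exponent $(K-3)/2$ and normalization, through the $2\times2$ real Wishart eigenvalue density and the Jacobian of the trace-normalization. I would double-check the result against the moment $\E|\zeta|^2=1/K\cdot 2/(1+1/K)\ldots$, or more simply against the $K=2$ case directly. The Beta decomposition and the final change of variables are routine.
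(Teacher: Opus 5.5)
Your proposal is correct and follows essentially the same route as the paper. Both write $U_{1K}U_{1K}^T=SR$ with $R\dsim\Beta(K,M-K)$ independent of $S=\sum_j s_j^2$ for $s$ uniform on the complex sphere, obtain the disk density $\frac{K-1}{2\pi}(1-|z|^2)^{(K-3)/2}$ for $S$ from the $2\times2$ real Wishart eigenvalues (the paper's stated alternative to citing Pereyra), and finish with $\int_{|w|}^1 f_R(x)f_S(w/x)x^{-2}\,dx$ and the planar rescaling with Jacobian $M^{-2}K$; your bookkeeping of the $x$-powers and the normalization checks out, noting only that the factor $r$ is the Vandermonde $|\lambda_1-\lambda_2|$, which cancels the polar factor $2\pi r$ when passing to the planar density.
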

\begin{proof}
The scalar quantity $U_{1K}U_{1K}^T$ is distributed as $u_1^2+\cdots+u_K^2$ for a random complex unit vector $u=(u_1,\ldots,u_M)\in\Sc^{M-1}$. For a vector $v$, we will use notation $v_{a:b}$ to indicate the vector $(v_a,v_{a+1},\ldots,v_b)$. Let $u=w/\|w\|_2$ for $w\dsim\CN(0,I_M)$ a standard complex Gaussian vector, so that
\begin{align*}
u_1^2+\cdots+u_K^2&=\frac{w_1^2+\cdots+w_K^2}{\|w_{1:K}\|_2^2}\frac{\|w_{1:K}\|_2^2}{\|w\|_2^2}=:SR,
\end{align*}
where $S=\frac{w_1^2+\cdots+w_K^2}{\|w_{1:K}\|_2^2}\dsim\sum_{j=1}^Ks_j^2$ for $s=(s_1,\ldots,s_K)\dsim\Unif(\Sc^{K-1})$, and $R:=\frac{\|w_{1:K}\|_2^2}{\|w\|_2^2}\dsim\Beta(K,M-K)$. Moreover, $S$ and $R$ are independent, since $(w_1,\ldots,w_K)/\|w_{1:K}\|_2$ is independent of $\|w_{1:K}\|_2$, and also of $w_{K+1:M}$. 
The density functions $f_R$ for $R\dsim\Beta(K,M-K)$ on $[0,1]$, and
$f_S$ for $S$ on the unit disk, are
\begin{align}
f_R(x)&=\frac{1}{B(K,M-K)}x^{K-1}(1-x)^{M-K-1},\;\text{ and } \;f_S(z)=\frac{K-1}{2\pi}(1-|z|^2)^{(K-3)/2},
\end{align}
where $B(x,y)=\frac{\Gamma(x)\Gamma(y)}{\Gamma(x+y)}$ is the beta function.
The density function $f_S$ was derived in \cite[Eq.~(3.12)]{pereyra1983marginal}; alternatively, it can be derived by writing $S=\sum_{j=1}^Ks_j^2$, for $s=x+iy$ with $x,y\dsim\RN(0,I_K)$ independent real Gaussian vectors, and then writing $|S|$ in terms of the eigenvalues of the real Wishart matrix $\begin{pmatrix}\|x\|_2^2&x\cdot y\\x\cdot y&\|y\|_2^2\end{pmatrix}=VV^T$ for $V=\binom{-x-}{-y-}$, since the density function for Wishart eigenvalues is well-known.

Since $S$ and $R$ are independent, the density function $f_{SR}$ for $SR$ on the unit disk is, e.g. by doing change of variables for $\E[h(SR)]$ for arbitrary $h$,
\begin{align*}
f_{SR}(z)&=\int_{|z|}^1dx\,f_R(x)f_S(z/x)\frac{1}{x^2}\\
&=\frac{K-1}{2\pi B(K,M-K)}\int_{|z|}^1dx\,(1-x)^{M-K-1}(x^2-|z|^2)^{(K-3)/2}.\numberthis
\end{align*}
With scaling factors, the density function $f_1$ of $MK^{-1/2}U_{1K}U_{1K}^T$ is then given by \eqref{eqn:f1density}.
\end{proof}

We can then give
\begin{proof}[Proof of Proposition~\ref{prop:inf}]
We start with the case $N=1$.
The density function of $1\times1$ $\Gb$ is $g_1(z)=\frac{1}{2\pi} e^{-|z|^2/2}$, and the density of $MK^{-1/2}U_{1K}U_{1K}^T$ is given by \eqref{eqn:f1density} of Lemma~\ref{lem:f1density}.

For Laplace estimation, we consider $\frac{f_1(z)}{g_1(z)}=\frac{K(K-1)}{M^2 B(K,M-K)}\int_{M^{-1}K^{1/2}|z|}^1 e^{F_{z}(x)}\,dx$, with
\begin{align}\label{eqn:Fz}
F_{z}(x)&=(M-K-1)\log(1-x)+\frac{K-3}{2}\log\left(x^2-M^{-2}K|z|^2\right)_++\frac{|z|^2}{2}.
\end{align}
Solving $F_z'(x)\equiv\frac{(K-3)x}{x^2-M^{-2}K|z|^2}-\frac{M-K-1}{1-x}=0$ and taking the positive root gives critical point
\begin{align}\label{eqn:xzroot}
x_z&=\frac{K-3+\sqrt{(K-3)^2+4(M-4)(M-K-1)M^{-2}K|z|^2}}{2(M-4)}.
\end{align}
If $x_z$ is bounded away from the limits of integration as $M\to\infty$, then one can check that Laplace's method\footnote{Since we only need a lower bound, we can actually skip Laplace's method (to avoid having to check precise conditions for error bounds), and just integrate over an $O(M^{-1/2})$ neighborhood of $x_*$, in which $F_{z_*}(x_*)\ge F_{z_*}(x)-O(1)$ by Taylor's theorem.} gives
\begin{align*}
\frac{f_1(z)}{g_1(z)}&=\frac{K(K-1)}{M^2 B(K,M-K)}\int_{M^{-1}K^{1/2}|z|}^1e^{F_z(x)}\,dx\\
&=\frac{K(K-1)}{M^2 B(K,M-K)}\sqrt{\frac{2\pi}{|F_z''(x_z)|}}e^{F_z(x_z)}\left(1+O(1/M)\right),\numberthis\label{eqn:fglaplace}
\end{align*}
with an implicit constant which may depend on $\alpha$.

We can choose any $z$ with $|z|\le MK^{-1/2}$ to try to make a large density ratio, so let's solve for critical points of $F_z(x_z)$ over $|z|$. Writing $\phi(x,z):=F_z(x)$, the multivariate chain rule gives $\frac{d}{d|z|}\phi(x_z,z)=\partial_x\phi(x_z,z)\frac{dx_z}{d|z|}+\partial_z\phi(x_z,z)$. Since $\partial_x\phi(x_z,z)=F_z'(x_z)=0$, we solve
\begin{align*}
(\partial_zF_z)(x_z)&\equiv\frac{K-3}{2}\frac{-2M^{-2}K|z|}{(x_z^2-M^{-2}K|z|^2)}+|z|=0,
\end{align*}
which has nonzero solutions
\begin{align}\label{eqn:zstar}
|z_*|^2:=M^2K^{-1}x_{z_*}^2-(K-3).
\end{align}
Let $x_*:=x_{z_*}$, so $F'_{z_*}(x_*)=0$.
Plugging \eqref{eqn:zstar} into the equation $F'_{z_*}(x_*)=0$ implies
\begin{align*}
(K-3)x_*(1-x_*)&=(M-K-1)M^{-2}K(K-3),
\end{align*}
so
$x_*(1-x_*)=(M-K-1)M^{-2}K$,
and $x_*=\frac{1\pm\sqrt{1-4(M-K-1)M^{-2}K}}{2}$.
We will take the larger root and recall $\alpha<1/2$, which gives
\begin{align*}
x_*&=1-\alpha+\frac{\alpha}{(1-2\alpha)M}+O(1/M^2),\numberthis\label{eqn:xstar}\\
|z_*|^2&=\frac{M}{\alpha}(1-\alpha)^2-M\alpha+\frac{2(1-\alpha)}{1-2\alpha}+3+O(1/M).\numberthis\label{eqn:zstar2}
\end{align*}
We also see that $M^{-1}K^{1/2}|z_*|=\sqrt{1-2\alpha}+O(1/M)$, so for $\alpha>0$ there is a constant gap separating $x_*=1-\alpha+O(1/M)$ from the lower limit of integration in \eqref{eqn:fglaplace} as $M\to\infty$.
For this $z_*$ and $x_*$,
\begin{align*}
F_{z_*}(x_*)&=(M-K-1)\log(1-x_*)+\frac{K-3}{2}\log(M^{-2}K(K-3))+\frac12|z_*|^2\\
&=M\left[\log\alpha+\frac{1-2\alpha}{2\alpha}\right]-4\log\alpha+O(1/M).\numberthis\label{eqn:fstar}
\end{align*}
For the second derivative term in \eqref{eqn:fglaplace}, evaluating the (negative of the) second derivative of $F_{z_*}$ at $x_*$ using \eqref{eqn:xstar} and \eqref{eqn:zstar2} gives
\begin{align*}
-F_{z_*}''(x_*)&=\frac{M-K-1}{(x_*-1)^2}+\frac{(K-3)(x_*^2+M^{-2}K|z_*|^2)}{(x_*^2-M^{-2}K|z_*|^2)^2}\\
&=M\frac{2-3\alpha}{\alpha^3}+O(1).\numberthis\label{eqn:fderiv2}
\end{align*}
The remaining term to expand in \eqref{eqn:fglaplace} is the beta function $B(K,M-K)=\frac{\Gamma(K)\Gamma(M-K)}{\Gamma(M)}$.
Using Stirling's formula $\log\Gamma(y)\sim(y-\frac12)\log y-y+\frac12\log(2\pi)+O(1/y)$, with $K,M-K\to\infty$, we see 
\begin{align*}
\log B&(K,M-K)\\
&=K\log\frac{K}{M}+(M-K)\log\frac{M-K}{M}+\frac12\log(2\pi)+\frac12\log\frac{M}{K(M-K)}+O(1/M)\\
&=M\left[\alpha\log\alpha+(1-\alpha)\log(1-\alpha)\right]-\frac12\log M-\frac12\log(\alpha(1-\alpha))+\frac12\log(2\pi)+O(1/M).\numberthis\label{eqn:betaexp}
\end{align*}
In total, plugging \eqref{eqn:fstar}, \eqref{eqn:fderiv2}, and \eqref{eqn:betaexp} into \eqref{eqn:fglaplace}, we obtain 
\begin{align}
\log\frac{f_1(z_*)}{g_1(z_*)}&=M\left[(1-\alpha)\log\frac{\alpha}{1-\alpha}+\frac{1-2\alpha}{2\alpha}\right]+\frac12\log(1-\alpha)-\frac12\log(2-3\alpha)+O(1/M).
\end{align}
Thus
\begin{align*}
\sup_z\frac{f_1(z)}{g_1(z)}&\ge \sqrt{\frac{1-\alpha}{2-3\alpha}}\exp\left({M\left[(1-\alpha)\log\frac{\alpha}{1-\alpha}+\frac{1-2\alpha}{2\alpha}\right]}\right)(1-O(1/M)).\numberthis\label{eqn:ratiosup}
\end{align*}
For $0<\alpha<1/2$, we can check that
\begin{align}
(1-\alpha)\log\frac{\alpha}{1-\alpha}+\frac{1-2\alpha}{2\alpha}>0,
\end{align}
for example by setting $y=\frac{1-\alpha}{\alpha}$ and noting the above is equivalent to $\frac{y^2-1}{2y}-\log y>0$ for $y>1$, i.e. $\alpha<1/2$, and that this holds by checking the derivative is $>0$ for $y>1$.
Then \eqref{eqn:ratiosup} gives
\begin{align}
\sup_z\frac{f_1(z)}{g_1(z)}&\ge \Omega_\alpha(e^{c_\alpha M}),
\end{align}
with the implicit constant and $c_\alpha$ depending on $\alpha=K/M$.
Moreover, for $\alpha\in[\eta,1/2-\eta]$, the constants can be taken uniform depending only on $\eta$.

The $N=1$ case implies the higher dimensional cases since $f_1$ and $g_1$ are marginal densities of the higher dimensional densities $f$ and $g$. 
If $\|f/g\|_\infty<\infty$, then letting $d\hat Z$ denote integration over all variables except the top left entry $Z_{11}$, we see
\begin{align}
f_1(Z_{11})=\int d\hat Z\,\frac{f(Z)}{g(Z)}{g(Z)}&\le \|f/g\|_\infty\int d\hat Z\,g(Z)=\|f/g\|_\infty g_1(Z_{11}).
\end{align}
Thus $\|f_1/g_1\|_\infty\le \|f/g\|_\infty$.
\end{proof}

\appendix

\section{Proof of Theorem~\ref{thm:hiding} with \texorpdfstring{$N=o(\sqrt{K})$}{N=o(sqrt(K))}}\label{sec:hiding2}

In Section~\ref{sec:hiding}, we proved Proposition~\ref{prop:hiding-3}, which is a weaker version of Theorem~\ref{thm:hiding}. In this section, we prove the full Theorem~\ref{thm:hiding}. The general proof idea is still the same as for Proposition~\ref{prop:hiding-3}, in particular using the key Lemma~\ref{lem:reduce}. The difference is only in the later part of the proof, where we will do a more technical estimate to bound $d_\tv(Q\Gb Q^T,\Gb)$.

Proposition~\ref{prop:hiding-3} works for any $N=o(K^{1/3})$, and the final bound \eqref{eqn:tv-final} shows the proof also works for $N=o(K^{1/2})$ if $L=M-K=O(M^{1/2})$. In this section, we improve the allowed size to $N=o(K^{1/2})$ for any $L$. 
Since we already have the desired result $N=o(K^{1/2})$ when $L=O(M^{1/2})$, and since $N=O(K^{1/2})=O(M^{1/2})$, we will always consider $L\ge N$ in this section. It suffices to prove the required bound in Theorem~\ref{thm:hiding} for $N^2\le c_0K$ with $c_0$ chosen sufficiently small. For convenience, we will also take $K\ge 2N$.

The proof starts in the same way as in Section~\ref{sec:hiding}, but the difference is we will bound the quantity $d_\tv(Q\Gb Q^T,\Gb)$ more sharply. 
Specifically, in this section we prove the stronger bound (compare to \eqref{eqn:tv-final})
\begin{align}\label{eqn:dtv-error}
d_\tv(Q\Gb Q^T,\Gb)&\le O\left(\frac{N}{\sqrt{K}}\right),
\end{align} 
using an exact $\chi^2$-divergence expression, and a standard Bakry--\'Emery concentration result \cite{bakry1985diffusions,kolesnikov2016riemannian}. Avoiding the inefficiency from Jensen's inequality with the KL-divergence in \eqref{eqn:dkljensen} will allow us to obtain the error bound \eqref{eqn:dtv-error}, at the cost of a more involved proof. Putting this new bound into \eqref{eqn:triangle} will then imply Theorem~\ref{thm:hiding}.

It will be useful to view symmetric $N\times N$ Gaussian matrices $\Gb$ and $q\Gb q^T$ directly as multivariate Gaussians in $d:=N(N+1)/2$ variables.
To this end, let $\sym_N:=\{X\in\C^{N\times N}:X^T=X\}\cong\C^{N(N+1)/2}$, with the inner product $\langle X|Y\rangle=\frac12\Tr(X^\dagger Y)$.
The usual orthonormal basis for $\sym_N$ consists of $|i\rangle\langle j|+|j\rangle\langle i|$ for $1\le i<j\le N$, and $\sqrt{2}|i\rangle\langle i|$ for $i=1,\ldots,N$.
For an invertible covariance operator $\Sigma:\sym_N\to\sym_N$, the circular-symmetric complex Gaussian with covariance $\Sigma$ has density on $\sym_N$, with respect to the above orthonormal basis\footnote{Note, due to the choice of inner product and resulting basis vector normalization, this differs by a constant factor from \eqref{eqn:fq}, which gives the density with respect to the upper triangular coordinates of the matrix.}, given by 
\begin{align}\label{eqn:density-sym}
f(Z)&= |\det\!{}_{\sym_N} \Sigma|^{-1}\pi^{-d}e^{-\frac12\Tr(Z^\dagger\Sigma^{-1}(Z))}.
\end{align}
We see, due to the orthonormal basis, that $\Sigma=I$ corresponds to the distribution of $\Gb\dsim\gsym$, whose entries have variance 2 on the diagonal and 1 off of the diagonal.

We can define a family of covariance operators $\Sigma_R$ on $\sym_N$ via $\Sigma_R(X):=RXR^T$, for $N\times N$ hermitian matrices $R$.
From a computation like \eqref{eqn:fq}, we see that for $q\ge0$, $q\Gb q^T$ has covariance operator $\Sigma_{q^2}$. Also, when $R>0$, then $\Sigma_R>0$ as well, since if $(r_i,\varphi_i)_i$ are eigenpairs of $R$, then $(r_ir_j, |\varphi_i\rangle\langle\bar\varphi_j|+|\varphi_j\rangle\langle\bar\varphi_i|)_{i\le j}$ are eigenpairs of $\Sigma_R$. 

As in Section~\ref{sec:hiding}, let $P_0=I_N-E_NVV^\dagger E_N^\dagger$, where $E_N=(I_N\;0_{M-N})$ and $V$ is the $M\times (M-K)$ matrix consisting of the last $M-K$ columns of the Haar random $U$. Set $Q=(MK^{-1}P_0)^{1/2}$ and $S=Q^\dagger Q=Q^2$.
For fixed $Q=q$, $q\Gb q^T$ viewed as a Gaussian on $\sym_N$ then has covariance $\Sigma_{s}$, for $s=q^\dagger q=q^2$.
Write $s=q^2=:I+\delta$ as in Section~\ref{sec:hiding}, and let $\Sigma_s=I_{\sym_N}+A$ with $A=A(\delta)=\Sigma_s-I_{\sym_N}$.
As before, it will be useful to ensure $\|\delta\|$ is bounded away from 1. For random $Q=(MK^{-1}P_0)^{1/2}$ and $\delta=Q^2-I$, define the event
\begin{align*}
\event:=\{\|\delta\|\le1/4\},\quad\text{which has }\P[\event^c]=O\left(\frac{LN^2}{KM}\right),
\end{align*}
by the same argument as \eqref{eqn:prob12}.
We will bound the distance between $\mu_\event=\mathcal L(Q\Gb Q^T|\event)$ and $\gamma=\mathcal L(\Gb)$.
Since $\P[\event^c]=O(\frac{LN^2}{KM})$, this will also give us an adequate bound on the total variation distance between $\mu=\mathcal L(Q\Gb Q^T)$ and $\mathcal L(\Gb)$.

It will be useful to use $\chi^2$-divergence.
The $\chi^2$-divergence between probability measures $\mu$ and $\nu$ is $\chi^2(\mu||\nu):=\int\big(\frac{d\mu}{d\nu}-1\big)^2\,d\nu=\int\big(\frac{d\mu}{d\nu}\big)^2\,d\nu-1$, and so $d_\tv(\mu,\nu)=\frac12\int|\frac{d\mu}{d\nu}-1|\,d\nu\le\frac12\sqrt{\chi^2(\mu||\nu)}$. The point is that for Gaussian mixture $\mu_\event$ and standard Gaussian $\gamma$ on $\sym_N$, we can evaluate an exact formula for the $\chi^2$-divergence.

The density of $\mu_\event=\mathcal L(Q\Gb Q^T|\event)$ is given as follows. 
Let $f_A$ be the density as in \eqref{eqn:density-sym} for the complex Gaussian on $\sym_N$ with covariance matrix $\Sigma=I_{\sym_N}+A$.
Letting $A$ be distributed as the random variable $A(\delta)=\Sigma_{I+\delta}-I_{\sym_N}$ conditioned on $\event$, the density for $\mu_\event$ is then given by $\E_A f_{A}$ (for example, one can use Fubini--Tonelli).
Then
\begin{align*}
1+\chi^2(\mu_\event||\gamma)=\int_{\sym_N}\left(\frac{d\mu_\event}{d\gamma}\right)^2\,d\gamma 
&=\int_{\sym_N}\frac{(\E_Af_{A}(Z))^2}{f_0^2(Z)}\,f_0(Z)\,dZ\\
&=\int_{\sym_N}\frac{\E_A f_A(Z)\E_B f_B(Z)}{f_0(Z)}\,dZ,\numberthis
\end{align*}
where $B$ is an independent copy of $A$. Since $f_A,f_B,f_0$ are Gaussian densities, we can evaluate, using \eqref{eqn:density-sym},
\begin{align*}
1+\chi^2(\mu_\event||\gamma)&=\E_{A,B}\int_{\sym_N}\pi^{-d}\det(I+A)^{-1}\det(I+B)^{-1}e^{-\frac12\Tr(Z^\dagger[(I+A)^{-1}+(I+B)^{-1}-I](Z))}\,dZ\\
&=\E_{A,B}\frac{\det(I+A)^{-1}\det(I+B)^{-1}}{\det((I+A)^{-1}+(I+B)^{-1}-I)}
=\E_{A,B}\frac{1}{\det(I-AB)},\numberthis\label{eqn:chi-det}
\end{align*}
the last equality by pulling out factors $\det(I+A)^{-1}$ and $\det(I+B)^{-1}$ on the left and right sides of the denominator, and provided all inverses are defined and $(I+A)^{-1}+(I+B)^{-1}-I>0$, which we now verify.

To check $(I+A)^{-1}$ and $(I+B)^{-1}$ exist, we can estimate $\|A\|$ on $\event$. Recall $S=I+\delta$ which is self-adjoint, and let $(\lambda_j,\varphi_j)_{j=1}^N$ be the eigenpairs of $\delta$.
Then one can check the (not necessarily normalized) eigenpairs of $\Sigma_S$ are $((1+\lambda_i)(1+\lambda_j),|\varphi_i\rangle\langle\bar\varphi_j|+|\varphi_j\rangle\langle\bar\varphi_i|)_{i\le j}$. On $\event=\{\|\delta\|\le1/4\}$, the absolute value of the eigenvalues of $A=\Sigma_S-I$ are thus
\begin{align}\label{eqn:Anorm}
|(1+\lambda_i)(1+\lambda_j)-1|&=|\lambda_i+\lambda_j+\lambda_i\lambda_j|\le \frac{9}{16}\quad\Longrightarrow\quad \|A\|\le\frac{9}{16}<1.
\end{align}
Similarly, we can use the above bound to see $(I+A)^{-1}+(I+B)^{-1}-I\ge (\frac{16}{25}+\frac{16}{25}-1)I=\frac{7}{25}I>0$.
Thus we do have in total
\begin{align}\label{eqn:chi2-det}
1+\chi^2(\mu_\event||\gamma)&=\E_{A,B}\det(I-AB)^{-1}.
\end{align}

The rest of the proof will be to bound $\E_{A,B}\det(I-AB)^{-1}$. We will do this one expectation at a time. Letting $A=A(\delta)$ and taking a logarithm, define
\begin{align}
F_B(\delta):=-\log\det(I-A(\delta)B),
\end{align}
so that $\E_A\det(I-AB)^{-1}=\E_A e^{F_B(\delta)}$.
Note that $\det(I-AB)$ is real and positive on $\event$, using the factorization in \eqref{eqn:chi-det} into ratios of determinants of positive operators on $\event$.

We will use the Bakry--\'Emery criterion \cite{bakry1985diffusions} to bound $\E_A e^{F_B(\delta)}$ via a concentration inequality.
We use the version from \cite[Theorem 2.1]{kolesnikov2016riemannian}, which handles sets with not-necessarily-smooth boundaries.
\begin{thm}[Bakry--\'Emery criterion \cite{bakry1985diffusions,kolesnikov2016riemannian}]\label{thm:bakry-emery}
Suppose $\Omega$ is a convex subset of $\R^m$ (whose interior is thus geodesically convex, i.e. there is a distance-minimizing geodesic between any two points).
Let $\mu$ be a probability measure of the form $d\mu(x)=e^{-V(x)}\,dx$ for $V$ smooth on the interior of $\Omega$. 
Recall the entropy functional for nonnegative $h$ is $\operatorname{Ent}_\mu(h):=\int h\log h\,d\mu-\int h\,d\mu\cdot \log\int h\,d\mu$.
If $D^2V_x[H,H]\ge\lambda\|H\|_\hs^2$ for all $x\in\Omega$, then for all $f\in C^1(\Omega)$,
\begin{align}
\operatorname{Ent}_\mu(f^2)&\le\frac{2}{\lambda}\int|\nabla f|^2\,d\mu.
\end{align}
\end{thm}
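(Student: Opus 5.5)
This is the classical Bakry--\'Emery logarithmic Sobolev inequality, which the paper cites from \cite{bakry1985diffusions,kolesnikov2016riemannian}; I would reprove it in two stages. First I treat the case where $\Omega$ is a bounded convex domain with smooth boundary and $V$ is smooth up to $\partial\Omega$. Then I pass to a general convex $\Omega$ by approximation.

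\emph{Smooth case, setup.} I consider the diffusion generator $L=\Delta-\nabla V\cdot\nabla$ on $\Omega$ with Neumann boundary conditions. It is symmetric with respect to $\mu$, and I write $P_t=e^{tL}$ for its semigroup, with $\Gamma(f)=|\nabla f|^2$. Bochner's formula gives
\begin{align*}
\Gamma_2(f)=\|D^2f\|_\hs^2+D^2V[\nabla f,\nabla f]\ge\lambda|\nabla f|^2 .
\end{align*}
The standard argument then produces the gradient commutation bound $|\nabla P_tf|\le e^{-\lambda t}P_t|\nabla f|$. This comes from differentiating $s\mapsto P_s\big(|\nabla P_{t-s}f|\big)$, or its square-root-free variant $P_s\big(|\nabla P_{t-s}h|^2/P_{t-s}h\big)$ for positive $h$.

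\emph{Smooth case, boundary and entropy.} Differentiating these quantities produces a boundary flux, and this is the one place the domain enters. For $g=P_{t-s}h$ with $\partial_n g=0$, one has $\partial_n|\nabla g|^2=-2\,\mathrm{II}(\nabla g,\nabla g)\le0$, where $\mathrm{II}$ is the second fundamental form of $\partial\Omega$. Convexity makes $\mathrm{II}\ge0$, so the flux has the right sign. I then write the entropy along the semigroup, using $P_\infty h=\int h\,d\mu$ by ergodicity on a bounded connected domain:
\begin{align*}
\operatorname{Ent}_\mu(h)=\int_0^\infty\!\!\int\frac{|\nabla P_th|^2}{P_th}\,d\mu\,dt\le\int_0^\infty e^{-2\lambda t}dt\int\frac{|\nabla h|^2}{h}\,d\mu=\frac{1}{2\lambda}\int\frac{|\nabla h|^2}{h}\,d\mu .
\end{align*}
The inequality uses Cauchy--Schwarz in the form $(P_t|\nabla h|)^2\le P_th\,P_t(|\nabla h|^2/h)$. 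Taking $h=f^2+\epsilon$ and letting $\epsilon\to0$ gives $\operatorname{Ent}_\mu(f^2)\le\frac{2}{\lambda}\int|\nabla f|^2\,d\mu$.

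\emph{General convex $\Omega$.} I exhaust $\operatorname{int}\Omega$ by an increasing sequence of smooth bounded convex domains $\Omega_k$, for example smoothed sublevel sets of a convex exhaustion function. On each $\Omega_k$, $V$ is smooth up to the boundary and still satisfies $D^2V\ge\lambda$. I apply the smooth case to the normalized restriction $\mu_k=\mu|_{\Omega_k}/\mu(\Omega_k)$. For $f\in C^1(\Omega)$ with the right-hand side finite, I let $k\to\infty$. Monotone and dominated convergence, after first truncating $f$ to be bounded and then removing the truncation, give convergence of both sides. The constant $2/\lambda$ does not depend on $k$, so it survives the limit. Since $\partial\Omega$ has Lebesgue measure zero, the boundary carries no mass.

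The main obstacle I expect is making the smooth-domain semigroup step rigorous: regularity of the Neumann semigroup up to the boundary, and justifying the differentiation and integration by parts. I would handle this by working with positive, smooth-up-to-the-boundary data in the domain of $L$ and appealing to standard elliptic and parabolic regularity on smooth bounded domains. A second delicate point is the approximation when $\Omega$ is unbounded or $V$ degenerates at $\partial\Omega$. Uniform convexity of $V$ makes $\mu$ have Gaussian tails, which supplies the integrability needed to take the limit.
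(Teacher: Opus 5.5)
Your proposal is correct. It is in substance the standard argument behind this result, which the paper does not prove but imports by citation from Bakry--\'Emery and Kolesnikov--Milman: the $\Gamma_2$ semigroup proof with Neumann boundary conditions, where convexity of the domain enters through the sign $\partial_n|\nabla g|^2=-2\,\mathrm{II}(\nabla g,\nabla g)\le 0$ of the boundary flux, followed by an exhaustion of $\operatorname{int}\Omega$ by smooth bounded convex domains.

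One small precision. With only $\Gamma_2\ge\lambda\Gamma$ as you display it, you should use the square-root-free route. It gives $\int|\nabla P_th|^2/P_th\,d\mu\le e^{-2\lambda t}\int|\nabla h|^2/h\,d\mu$ directly and makes the Cauchy--Schwarz step unnecessary. The $|\nabla P_{t-s}f|$ route additionally needs the refined bound $|\nabla|\nabla g||\le\|D^2g\|_{\hs}$.
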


By the standard Herbst argument, the entropy bound with $f=e^{tF/2}$ implies for $L$-Lipschitz $F$ (see e.g. \cite[\S3.1.2]{Wainwright}),
\begin{align}\label{eqn:conc}
\log \E_\mu e^{t(F-\E F)}&\le \frac{t^2L^2}{2\lambda},\quad t\in\R.
\end{align}

We apply this to $F=F_B$ and $V$ the negative log-density of $\delta$ restricted to $\Omega$ which will be a convex subset of $\event$. 
We bound the Hessian of $V$ and the Lipschitz constant for $F_B$ as follows.
For $V$, recall $\delta=S-I=MK^{-1}P_0-I$, where $P_0=(E_NA_1)(E_NA_1)^\dagger$ for $E_NA_1$ the top left $N\times K$ submatrix of the Haar random unitary $U$.
Then for $N\le K,L$, $P_0$ follows a complex matrix-variate beta distribution with density $f(P)\propto(\det P)^{K-N}\det(I-P)^{L-N}\oneb_{0<P<1}$; see the argument for the real orthogonal case in \cite[Lemma 2.12]{Meckes-book}, which in the complex case gives the complex matrix-variate beta density in \cite[\S2]{diaz-garcia2010complex}. 
This gives the negative log-density of $\delta$ as 
\begin{align}
V(\delta)=-(K-N)\log\det(I+\delta)-(L-N)\log\det(L/K-\delta)+C',
\end{align}
for some constant $C'$ depending on $K,L,M,N$, and with the domain $\oneb_{0<P<I}$ replaced by $\oneb_{-I<\delta<L/K}$. Thus we take $\Omega:=\event\cap\{-I<\delta<L/K\}$, which is convex.

The necessary matrix derivatives can be calculated using the formulas \cite{cookbook}
\begin{align*}
\partial_t \det A(t)&=\det A(t) \Tr(A(t)^{-1}\partial_tA(t)),\\
\partial_t \Tr A(t)&= \Tr\partial_tA(t),\\ 
\partial_t A^{-1}&=-A^{-1}(\partial_t A)A^{-1}. % (59)
\end{align*}
This gives
\begin{align*}
DV(\delta)[H]&=\partial_{t=0}V(\delta+tH)=-(K-N)\Tr((I+\delta)^{-1}H)+(L-N)\Tr((L/K-\delta)^{-1}H).
\end{align*}
Taking another derivative using $\delta\mapsto\delta+tH$, we get for $H$ hermitian,
\begin{align*}
D^2V(\delta)[H,H]&=(K-N)\Tr((I+\delta)^{-1}H(I+\delta)^{-1}H)+(L-N)\Tr((L/K-\delta)^{-1}H(L/K-\delta)^{-1}H) \\
&\ge (K-N)\Tr((I+\delta)^{-1}H(I+\delta)^{-1}H)\\
&\ge (K-N)\left(\frac{4}{5}\right)^2\|H\|_\hs^2\ge cK\|H\|_\hs^2,\quad\text{on $\event$,}\numberthis\label{eqn:V-hessian}
\end{align*}
and since we assume $K\ge2N$.
For the Lipschitz constant for $F_B(\delta)=-\log\det(I_{\sym_N}-A(\delta)B)$, we similarly compute
\begin{align}\label{eqn:FB-hessian}
DF_B(\delta)[H]&=\partial_{t=0}F_B(\delta+tH)=\Tr[(I-AB)^{-1}[DA(\delta)[H]]B].
\end{align}
Since we can also compute $DA(\delta)[H](X)=HX(1+\delta)^T+(1+\delta)XH^T$,
we can check that as an operator on $\sym_N$, 
\begin{align}\label{eqn:DA}
\|DA(\delta)[H]\|_\hs\le C\sqrt{N}\|H\|_\hs, 
\end{align}
for example by calculating the Hilbert--Schmidt norms in the usual basis for $\sym_N$.
From \eqref{eqn:FB-hessian}, and $\|(I-AB)^{-1}\|\le c$ and \eqref{eqn:Anorm}, we then get
\begin{align}
\begin{aligned}
\|\nabla F_B\|&=\sup_{H=H^\dagger:\|H\|_\hs=1}|DF_B(\delta)[H]| \le C\sqrt{N}\|B\|_\hs. 
\end{aligned}
\end{align}
Combining with \eqref{eqn:V-hessian} and taking the convex set $\Omega=\event\cap\{-I<\delta<L/K\}$, Theorem~\ref{thm:bakry-emery} and \eqref{eqn:conc} with $t=1$ thus give
\begin{align}\label{eqn:bl1}
\E_A e^{F_B}&\le\exp\left[\E_A F_B+\frac{CN}{K}\|B\|_\hs^2\right].
\end{align}
To bound $\E_A F_B$, we use the power series expansion $F_B(\delta)=-\log\det(I-A(\delta)B)=\sum_{j=1}^\infty\frac1j\Tr((AB)^j)$. For $j\ge2$,
\begin{align}
|\Tr((AB)^j)|&\le \|AB\|^{j-2}\|AB\|_\hs^2 \le \left(\frac{9}{16}\right)^{2(j-2)}\Tr(A^2B^2). 
\end{align}
Thus $F_B(\delta)\le \Tr(AB)+C\Tr(A^2B^2)$, and it will next be enough to bound $\E_A A$ and $\E_A A^2$, which is done by the following lemma.

\begin{lem}[expectation values]\label{lem:id}
Let $A=A(\delta)=\Sigma_{I+\delta}-I_{\sym_N}$, conditioned on the event $\event$. Then there are scalars $a,b$ such that
\begin{align}
\E A&=aI,\quad \E A^2=bI,\quad
\text{with }a=O\bigg(\frac{\sqrt{N}}{K}\bigg), \quad b=O\left(\frac{N}{K}\right).
\end{align}
\end{lem}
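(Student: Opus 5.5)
Two things have to be shown: that $\E A$ and $\E A^2$ are scalar multiples of the identity, and that the scalars satisfy the stated bounds.

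\emph{Scalar structure.} The law of $\delta$ is invariant under unitary conjugation $\delta\mapsto W\delta W^\dagger$. This holds because $P_0=I_N-vv^\dagger$, where $v=E_NV$, and the law of $v$ is invariant under left multiplication by $W\in\U(N)$ by Haar invariance of $U$. The event $\event=\{\|\delta\|\le1/4\}$ is also invariant under this conjugation. Next, $A(\delta)(X)=\delta X+X\delta^T+\delta X\delta^T$, and under $\delta\mapsto W\delta W^\dagger$ this becomes $\mathcal W A(\delta)\mathcal W^{-1}$, where $\mathcal W(X)=WXW^T$. So $\E A$ and $\E A^2$ commute with every $\mathcal W$. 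The representation $X\mapsto WXW^T$ of $\U(N)$ on $\sym_N$ is the irreducible representation $\mathrm{Sym}^2$. By Schur's lemma, both expectations are scalars, $\E A=aI$ and $\E A^2=bI$.

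\emph{Computing $a$.} Take traces over $\sym_N$, where $d=N(N+1)/2$. The eigenvalues of $A$ are $\lambda_i+\lambda_j+\lambda_i\lambda_j$ for $i\le j$. Summing gives
\begin{align*}
\Tr_{\sym}A=(N+1)\Tr\delta+\tfrac12\big[(\Tr\delta)^2+\Tr(\delta^2)\big],
\end{align*}
so $a=\E[\Tr_\sym A\mid\event]/d$. The unconditional moments are $\E\Tr\delta=0$, $\E(\Tr\delta)^2=O(LN/(KM))$ and $\E\Tr\delta^2=O(LN^2/(KM))$, by \eqref{eqn:trdelta2} and \eqref{eqn:trdelta22}. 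Conditioning on $\event$ changes these by terms controlled by $\P[\event^c]$. Using Cauchy--Schwarz together with higher moments, for instance $\E(\Tr\delta)^2$ and $\E\Tr\delta^4$, these corrections are negligible. This gives $|\E[\Tr\delta\mid\event]|$ small, so the first term contributes $O(N)\cdot(\text{small})$ and the remaining terms contribute $O(LN^2/(KM))$. Dividing by $d\asymp N^2$ gives
\begin{align*}
a=O\big(1/K+|\E[\Tr\delta\,\oneb_\event]|/N\big).
\end{align*}
The target $a=O(\sqrt N/K)$ is even weaker than this. For example, $|\E[\Tr\delta\oneb_{\event}]|=|\E[\Tr\delta\oneb_{\event^c}]|\le\sqrt{\E(\Tr\delta)^2\,\P[\event^c]}\le O(N^{3/2}/K)$ suffices.

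\emph{Computing $b$.} In the same way, $b=\E[\sum_{i\le j}(\lambda_i+\lambda_j+\lambda_i\lambda_j)^2\mid\event]/d$. On $\event$ the summand is at most $C(\lambda_i^2+\lambda_j^2)$, so the sum is at most $C N\Tr(\delta^2)$. This gives $b\le CN\,\E\Tr\delta^2/d=O(LN^3/(KMN^2))=O(N/K)$, using \eqref{eqn:trdelta22} and noting that the conditioning only helps here, since the integrand is nonnegative.

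\emph{Main obstacle.} The only delicate point is the first-order term $(N+1)\E[\Tr\delta\mid\event]$ in $a$. It must be controlled via $\E\Tr\delta=0$ and the smallness of $\P[\event^c]$, rather than bounded crudely. I would handle it with the Cauchy--Schwarz estimate above, and if necessary replace $\P[\event^c]$ by a sharper tail bound from higher moments of $\Tr\delta^2$.
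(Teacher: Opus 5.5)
Your proof is correct. For the quantitative part it is the paper's argument. You use the same identity $\Tr_{\sym_N}A=(N+1)\Tr\delta+\frac12[(\Tr\delta)^2+\Tr(\delta^2)]$ and the same use of $\E\Tr\delta=0$ with Cauchy--Schwarz, $|\E[\Tr\delta\,\oneb_{\event}]|=|\E[\Tr\delta\,\oneb_{\event^c}]|\le\sqrt{\E(\Tr\delta)^2\,\P[\event^c]}$. You also use the same bound on $\event$ for $b$, namely $\sum_{i\le j}(\lambda_i+\lambda_j+\lambda_i\lambda_j)^2\le C(N+1)\Tr\delta^2$.

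Where you differ is in showing the expectations are scalars. You check $A(W\delta W^\dagger)=\mathcal W A(\delta)\mathcal W^{-1}$ and apply Schur's lemma to the irreducible representation $X\mapsto WXW^T$ of $\U(N)$ on $\sym_N\cong\mathrm{Sym}^2(\C^N)$. This handles $\E A$ and $\E A^2$ at once, since both are conjugation-equivariant. The paper instead proves the elementary Lemma~\ref{lem:iso} by hand:
\begin{itemize}
\item invariance under $\operatorname{diag}(1,e^{it}I_{N-1})$ forces $\E\Sigma_R(|e_1\rangle\langle e_1|)$ to be a multiple of $|e_1\rangle\langle e_1|$;
\item real orthogonal rotations then extend this to every $|x\rangle\langle x|$ with $x\in\R^N$;
\item these span $\sym_N$.
\end{itemize}
It then has to write $A^2=\Sigma_{S^2}-2\Sigma_S+I$ so it can apply that lemma with $R=S$ and $R=S^2$. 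Your route is shorter and covers any equivariant function of $\delta$, but it quotes irreducibility of $\mathrm{Sym}^2$; the paper's is self-contained.

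Two small points.
\begin{itemize}
\item Conditioning does not literally ``only help'' for nonnegative integrands. One has $\E[X\mid\event]\le\E[X]/\P[\event]$, so you need $\P[\event]\ge c>0$. This holds here because $N^2\le c_0K$ gives $\P[\event^c]=O(c_0)$. The same lower bound is needed to pass from $\E[\Tr\delta\,\oneb_\event]$ to $\E[\Tr\delta\mid\event]$.
\item Your hedging about $\E\Tr\delta^4$ or sharper tail bounds is unnecessary. The plain Cauchy--Schwarz estimate already gives $O(N^{3/2}/K)$, hence $a=O(\sqrt N/K)$.
\end{itemize}
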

This lemma will be proved in Section~\ref{subsec:proof-lem-id}. Applying it to \eqref{eqn:bl1} and the discussion below it, we obtain
\begin{align}
\E_A\det(I-AB)^{-1}&\le e^{q(B)},\quad\text{ for }q(B)=a\Tr B+\left(b+\frac{CN}{K}\right)\Tr(B^2).
\end{align}

Now we apply Theorem~\ref{thm:bakry-emery} with \eqref{eqn:conc} again, this time to $\E_B e^{q(B)}$.
We calculate the gradient of $q(B)$, letting $DB[H]$ denote $DB(\delta)[H]$, which satisfies the bound \eqref{eqn:DA},
\begin{align}
Dq(B(\delta))[H]&=a\Tr(DB[H])+2(b+CN/K)\Tr[B(\delta)DB[H]]. 
\end{align}
Using \eqref{eqn:DA}, $b=O(\frac{N}{K})$, $\|B\|_\hs\le \sqrt{d}\|B\|=O(N)$ from \eqref{eqn:Anorm}, and $|\Tr_{\sym_N} Y|\le cN\|Y\|_\hs$ and $|\Tr XY|\le\|X\|_\hs\|Y\|_\hs$, we see on $\event$,
\begin{align}
\|\nabla q(B)\|&\le O\left(|a|N^{3/2}+\frac{N^{5/2}}{K}\right). 
\end{align}
Then Theorem~\ref{thm:bakry-emery} with \eqref{eqn:conc}, and $a=O(\frac{\sqrt{N}}{K})$, give
\begin{align}
\E_B e^{q(B)}&\le \exp\left[\E_B q(B)+\frac{C}{K}\left(\frac{N^4}{K^2}+\frac{N^5}{K^2}\right)\right].
\end{align}
Using Lemma~\ref{lem:id} for $B$, along with the bounds on $a=O(\frac{\sqrt{N}}{K})$ and $b=O(\frac{N}{K})$, then gives 
\begin{align}
\E_Bq(B)&\le da^2+d\frac{C'N}{K}b\le O\left(\frac{N^3}{K^2}+\frac{N^4}{K^2}\right)=O\left(\frac{N^4}{K^2}\right),
\end{align}
since $N^2\le c_0K$ by assumption. Thus
\begin{align}
\E_{A,B}\det(I-AB)^{-1}&\le \E_Be^{q(B)}\le \exp\left(O\bigg(\frac{N^4}{K^2}\bigg)\right).
\end{align}

Finally, returning to \eqref{eqn:chi2-det}, we get for $N^2\le c_0K$,
\begin{align}
\chi^2(\mu_\event||\gamma)&\le e^{O\left(\frac{N^4}{K^2}\right)}-1 \le O\left(\frac{N^4}{K^2}\right).
\end{align}
Since $d_\tv(\mu_\event,\gamma)\le\frac12\sqrt{\chi^2(\mu_\event||\gamma)}$, and $d_\tv(\mu_\event,\mu)\le \P[\event^c]\le O(\frac{LN^2}{KM})$ for $\mu=\mathcal L(Q\Gb Q^T)$ (for example, use a coupling $Z=Q\Gb Q^T\dsim\mu$, $Y=Z$ on $\event(Q)$, $Y\dsim\mu_\event$ on $\event(Q)^c$), we obtain 
\begin{align}
d_\tv(Q\Gb Q^T,\Gb)&\le O\left(\frac{N}{\sqrt{K}}\right).
\end{align}
Inserting this in \eqref{eqn:triangle} gives Theorem~\ref{thm:hiding}. \qed

\subsection{Proof of Lemma~\ref{lem:id}}\label{subsec:proof-lem-id}
In this section, we prove Lemma~\ref{lem:id}. It will be an application of the following result, followed by some trace bounds similar to those in Section~\ref{sec:hiding}.
\begin{lem}\label{lem:iso}
Let $R$ be a random hermitian matrix with law invariant under $R\mapsto VRV^\dagger$ for any $V\in\U(N)$. Then $\E\Sigma_R=a_RI$ for some $a_R\in\R$.
\end{lem}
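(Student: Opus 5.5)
The plan is to show that the operator $T:=\E\Sigma_R$ on $\sym_N$ commutes with the representation of $\U(N)$ on $\sym_N$ given by $\rho_V(X):=VXV^T$. We then conclude by Schur's lemma, using that this representation (the symmetric square $\mathrm{Sym}^2(\C^N)$ of the defining representation) is irreducible.

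\textbf{Step 1 (equivariance).} For fixed hermitian $R$ and unitary $V$, and any $X\in\sym_N$,
\begin{align*}
\rho_V\Sigma_R\rho_V^{-1}(X)=V R (V^{-1}X(V^{-1})^T) R^T V^T = (VRV^\dagger)X(VRV^\dagger)^T=\Sigma_{VRV^\dagger}(X).
\end{align*}
Here we used $V^{-1}=V^\dagger$, $(V^\dagger)^T=\bar V$, and $R^TV^T=(VR)^T$. So $\rho_V\Sigma_R\rho_V^{-1}=\Sigma_{VRV^\dagger}$. Taking expectations, and using that $VRV^\dagger\rvsim R$, gives $\rho_V T\rho_V^{-1}=T$ for all $V\in\U(N)$. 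The expectation is finite when $\E\|R\|^2<\infty$, which holds in our application since $R=I+\delta$ is bounded on $\event$. Note that conditioning on $\event=\{\|\delta\|\le 1/4\}$ preserves unitary invariance, because $\|\cdot\|$ is unitarily invariant.

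\textbf{Step 2 (irreducibility and Schur).} The space $\sym_N$ is identified with $\mathrm{Sym}^2(\C^N)$ via $u\otimes u\mapsto uu^T$, with $\rho_V$ acting as $V\otimes V$. This is the irreducible $\U(N)$-representation of highest weight $(2,0,\dots,0)$. To keep the argument elementary, one can instead show directly that the commutant is trivial. Diagonal $V=\mathrm{diag}(e^{i\theta_k})$ act on the basis element $|i\rangle\langle j|+|j\rangle\langle i|$ by the character $e^{i(\theta_i+\theta_j)}$, and these characters are distinct for distinct pairs $\{i,j\}$. Hence $T$ is diagonal in this basis. Permutation matrices permute the basis elements of the same type, and the real rotations mixing coordinates $1,2$ connect $\sqrt2|1\rangle\langle1|$ with $|1\rangle\langle2|+|2\rangle\langle1|$. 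Together these force all the diagonal values to be equal. Schur's lemma then gives $T=a_R I_{\sym_N}$.

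\textbf{Step 3 (reality).} The scalar $a_R$ is real because each $\Sigma_R$ is self-adjoint with respect to $\langle X|Y\rangle=\frac12\Tr(X^\dagger Y)$ when $R$ is hermitian: $\Tr(X^\dagger RYR^T)=\Tr((RXR^T)^\dagger Y)$, since $\bar R=R^T$. Therefore $T$ is self-adjoint and $a_R=\frac1d\Tr_{\sym_N}T\in\R$.

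The only real content is the irreducibility in Step 2. The remaining steps are routine.
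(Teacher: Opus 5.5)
Your proof is correct, but it takes a different route from the paper's. The paper never writes down the equivariance $\rho_V\Sigma_R\rho_V^{-1}=\Sigma_{VRV^\dagger}$ on all of $\sym_N$; it works only with rank-one elements, in three steps:
\begin{enumerate}
\item invariance under the phase subgroup $\operatorname{diag}(1,e^{it}I_{N-1})$, which fixes $e_1$, shows $\E\Sigma_R(|e_1\rangle\langle e_1|)=a|e_1\rangle\langle e_1|$;
\item an orthogonal $V$ with $Ve_1=x$ transports this to every real $x$;
\item polarization, $|e_i\rangle\langle e_j|+|e_j\rangle\langle e_i|=|e_i+e_j\rangle\langle e_i+e_j|-|e_i\rangle\langle e_i|-|e_j\rangle\langle e_j|$, reaches the off-diagonal basis elements.
\end{enumerate}
You instead show that the commutant of $\{\rho_V\}$ is trivial. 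Torus weights give one-dimensional weight spaces, so $T$ is diagonal; permutations equalize the values within each type; and one real rotation links the diagonal and off-diagonal types.

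Both arguments use only phases and real orthogonal matrices, and both are equally elementary. The paper's version produces the scalar explicitly as $a=\E R_{11}^2$, so reality (indeed nonnegativity) is immediate; you need the separate self-adjointness check of Step~3. Your version isolates the real reason the lemma holds, namely irreducibility of $\mathrm{Sym}^2(\C^N)$ under $V\mapsto V\otimes V$. It would apply unchanged to any $\U(N)$-equivariant operator-valued function of $R$.

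One small remark: your elementary argument in Step~2 already shows directly that $T$ is scalar, so the closing appeal to Schur's lemma is redundant, though harmless.
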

\begin{proof}[Proof of Lemma~\ref{lem:iso}]
It will be enough to show that $\E\Sigma_R(|x\rangle\langle x|)=a|x\rangle\langle x|$ for every $x\in\R^N$, since the real basis elements $|e_i\rangle\langle e_j|+|e_j\rangle\langle e_i|$ for $\sym_N$ can be expressed as $|e_i+e_j\rangle\langle e_i+e_j|-|e_i\rangle\langle e_i|-|e_j\rangle\langle e_j|$. 
We start with $|x\rangle\langle x|=|e_1\rangle\langle e_1|=\begin{pmatrix}1&\mathbf 0\\\mathbf 0&\mathbf 0_{N-1}\end{pmatrix}$. We see that $\E\Sigma_R(|e_1\rangle\langle e_1|)$ is invariant under the unitary congruence $(\cdot)\mapsto D(\cdot)D^T$ by the block matrix $D:=\operatorname{diag}(1,W)$ for any $W\in\U(N-1)$, since noting that $D|e_1\rangle\langle e_1|D^T=|e_1\rangle\langle e_1|$, then
\begin{align*}
DR|e_1\rangle\langle e_1|R^TD^T&=DRD^\dagger D|e_1\rangle\langle e_1|D^T(D^T)^\dagger R^TD^T\\
&\rvsim R|e_1\rangle\langle e_1|R^T.\numberthis\label{eqn:wphase}
\end{align*}
Write $\E\Sigma_R(|e_1\rangle\langle e_1|)\equiv\E R|e_1\rangle\langle e_1|R^T=\begin{pmatrix}a&b\\b^T&C\end{pmatrix}$ for $a\in\R$ and $C$ an $(N-1)\times(N-1)$ complex symmetric matrix.
Then taking $W=e^{it}I_{N-1}$ for $t\in\R$ in \eqref{eqn:wphase} shows $b=0$ and $C=0$, so $\E\Sigma_R(|e_1\rangle\langle e_1|)=a|e_1\rangle\langle e_1|$. For general unit $x\in\R^N$, we can then simply let $V\in\mathrm{O}(N)$ be an orthogonal matrix such that $V|e_1\rangle=|x\rangle$; then $V^\dagger=V^T$ and
\begin{align}
\Sigma_R(|x\rangle\langle x|)=R|x\rangle\langle x|R^T&=RV|e_1\rangle\langle e_1|V^T R^T\rvsim VR|e_1\rangle\langle e_1|R^TV^T,
\end{align}
and so $\E \Sigma_R(|x\rangle\langle x|)=Va|e_1\rangle\langle e_1|V^T=a|x\rangle\langle x|$, as desired.
\end{proof}

We apply this to estimate $\E_\event A$ and $\E_\event A^2$ in Lemma~\ref{lem:id}.
Since $S=1+\delta=MK^{-1}P_0=MK^{-1}(I_N-E_NVV^\dagger E_N^\dagger)$ for $V$ the last $M-K$ columns of an $M\times M$ Haar random matrix, one can check that $S\rvsim \xi_NS\xi_N^\dagger$ for any $\xi_N\in\U(N)$. (For example, one can use $\xi_NE_N=[\xi_N\;0_{M-N}]=E_N(\xi_N\oplus I_{M-N})$, following by invariance of $V$ under multiplication by $(\xi_N\oplus I_{M-N})\in\U(M)$.) Moreover, since conjugation preserves eigenvalues and $S=I+\delta$, the distributions conditioned on $\event$ are also the same.
From the lemma, then $\E_\event A=\E_\event\Sigma_S-I=aI$ for some $a$, and $\E_\event A^2=\E_\event\Sigma_{S^2}-2\E_\event\Sigma_S+I=bI$ for some $b$. 
\begin{itemize}[leftmargin=*]

\item For $\E_\event A^2=bI$, in terms of the eigenvalues $\lambda_j$ for $\delta$ (see \eqref{eqn:Anorm}), we have $\Tr_{\sym_N}(A^2)=\sum_{i\le j}(\lambda_i+\lambda_j+\lambda_i\lambda_j)^2\le \sum_{i\le j}C(\lambda_i^2+\lambda_j^2)$. Thus
\begin{align*}
b=\frac{1}{\dim\sym_N}\E_\event\Tr_{\sym_N}(A^2) 
&\le \frac{2}{N(N+1)}\E_\event\sum_{i\le j}C(\lambda_i^2+\lambda_j^2)\\
&\le \frac{C'}{N+1}\E_\event\Tr\delta^2=O\left(\frac{LN}{KM}\right),\numberthis\label{eqn:b}
\end{align*}
where we used \eqref{eqn:trdelta22} and that $\P[\event]\ge 1-c$ to conclude for any $X\ge0$, $\E[X|\event]=\E[X\oneb_{\event}]/\P[\event]\le C\E[X]$.

\item For $\E_\event A=aI$,  we have $a=\frac{2}{N(N+1)}\E_\event\Tr_{\sym_N}A$. Letting $(s_i)_i$ be the eigenvalues of $S$, then the eigenvalues of $\Sigma_S$ are $(s_is_j)_{i\le j}$ (see the discussion before \eqref{eqn:Anorm}), and so
\begin{align}
\Tr_{\sym_N}\Sigma_S&=\sum_{i\le j}s_is_j=\frac12[(\Tr S)^2+\Tr(S^2)].
\end{align}
Since $A=\Sigma_S-I$ and $S=I+\delta$, we then have
\begin{align}\label{eqn:TrA}
\Tr_{\sym_N}A=\frac12[(\Tr\delta)^2+\Tr(\delta^2)]+(N+1)\Tr\delta.
\end{align}
Using $\E\Tr\delta=0$, we can bound $\E_\event\Tr\delta$ as 
\begin{align*}
|\E_\event\Tr\delta|&=\frac{|\E\Tr\delta\oneb_{\event^c}|}{\P[\event]}\le C\sqrt{\E(\Tr\delta)^2}\sqrt{\P[\event^c]}.
\end{align*}
Dividing \eqref{eqn:TrA} by $\dim\sym_N=\frac{N(N+1)}{2}$ and using \eqref{eqn:trdelta2}, \eqref{eqn:trdelta22} to bound the unconditioned expectation values of $(\Tr\delta)^2$ and $\Tr(\delta^2)$ terms, and \eqref{eqn:prob12} for $\P[\event^c]\le O(\frac{LN^2}{KM})$, we get
\begin{align}
|a|\le O\bigg(\frac{L\sqrt{N}}{KM}\bigg)=O\bigg(\frac{\sqrt{N}}{K}\bigg).
\end{align}
\end{itemize}
This proves Lemma~\ref{lem:id}. \qed

\vspace{2mm}
\noindent
\textbf{Acknowledgments.}
This project used GPT-5.5 Thinking and Pro and GPT-5.6 Sol for coming up with proof ideas and methods, as well as for general checking and proofreading. The paper was written by the authors and all results and proofs were checked and validated by the authors, who are fully responsible for the final content. We thank Joseph Iosue and Yu-Xin Wang for useful discussions.  L.S.\ and A.V.G.~acknowledge support from the U.S.~Department of Energy, Office of Science, Accelerated Research in Quantum Computing, Fundamental Algorithmic Research toward Quantum Utility (FAR-Qu). L.S.\ and A.V.G.~were also supported in part by ARL (W911NF-24-2-0107), ONR MURI, and NSF QLCI (award No.~OMA-2120757). V.G. was supported by the US Army Research Office under Grant Number W911NF-23-1-0241.

\bibliographystyle{amsalpha_edit}
\bibliography{hiding.bib}

\end{document}